\documentclass[12pt]{article}
\usepackage{amsmath}
\usepackage{graphicx,psfrag,epsf}
\usepackage{enumerate}
\usepackage{natbib}
\usepackage{authblk}
\usepackage{bm}
\usepackage{multirow}
\usepackage{booktabs}
\usepackage{verbatim}
\usepackage{etoolbox}
\usepackage{amsthm}
\usepackage[noend]{algpseudocode}
\newtheorem{thm}{Theorem}[section]

\newcommand{\bcaption}[2]{\caption{\textbf{#1} #2}}
\newcommand{\beginsupplement}{%
        \setcounter{table}{0}
        \renewcommand{\thetable}{S\arabic{table}}%
        \setcounter{figure}{0}
        \renewcommand{\thefigure}{S\arabic{figure}}%
     }
\usepackage{algorithmicx,algorithm}
\newcommand{\code}[1]{\texttt{#1}}
\usepackage{url} 

\newcommand{\blind}{0}
\newcommand*\samethanks[1][\value{footnote}]{\footnotemark[#1]}
\addtolength{\oddsidemargin}{-.5in}%
\addtolength{\evensidemargin}{-.5in}%
\addtolength{\textwidth}{1in}%
\addtolength{\textheight}{1.3in}%
\addtolength{\topmargin}{-.8in}%

\begin{document}

\def\spacingset#1{\renewcommand{\baselinestretch}%
{#1}\small\normalsize} \spacingset{1}


\if0\blind
{
  \title{\bf A set of efficient methods to generate high-dimensional binary data with specified correlation structures}
  \author{Wei Jiang\thanks{
    These authors contributed equally to this work.}\hspace{.2cm}\\
    Department of Biostatistics, School of Public Health, Yale University, New Haven, CT, US\\
    and \\
    Shuang Song\samethanks
    \hspace{.2cm}\\
    Center for Statistical Science, Tsinghua University, Beijing 100084, China\\
    Department of Industrial Engineering, Tsinghua University, Beijing 100084, China\\
    and \\
    Lin Hou\hspace{.2cm}\\
  Center for Statistical Science, Tsinghua University, Beijing 100084, China\\
    Department of Industrial Engineering, Tsinghua University, Beijing 100084, China\\
    and \\
    Hongyu Zhao\thanks{
   Corresponding author:
               hongyu.zhao@yale.edu} \\
    Department of Biostatistics, School of Public Health, Yale University, New Haven, CT, US}
  \maketitle
} \fi

\if1\blind
{
  \bigskip
  \bigskip
  \bigskip
  \begin{center}
    {\LARGE\bf A set of efficient methods to generate high-dimensional binary data with specified correlation structures}
\end{center}
  \medskip
} \fi

\bigskip
\begin{abstract}
High dimensional correlated binary data arise  in many areas, such as observed genetic variations in biomedical research. 
 Data simulation can help researchers evaluate efficiency and explore properties of different computational and statistical  methods. Also, some statistical methods, such as Monte-Carlo methods, rely on data simulation. Lunn and Davies (1998) proposed linear time complexity methods to generate correlated binary variables with three common correlation structures. However, it is infeasible to specify unequal probabilities in their methods. In this manuscript, we introduce several computationally efficient algorithms that generate high-dimensional  binary data with specified correlation structures
 and unequal probabilities. Our algorithms have linear time complexity with respect to the dimension for three commonly studied correlation structures, namely exchangeable, decaying-product and $K$-dependent correlation structures. In addition, we extend our algorithms to generate binary data of general non-negative correlation matrices with quadratic time complexity.  We  provide an R package, CorBin, to implement our simulation methods.    Compared to the existing packages for binary data generation, the time cost to generate  a $100$-dimensional binary vector with the common correlation structures and general correlation matrices can be reduced up to $10^5$ folds  and $10^3$ folds, respectively, and the  efficiency can be further improved with the increase of dimensions. The R package CorBin is available on CRAN at https://cran.r-project.org/.
 \end{abstract}

\noindent%
{\it Keywords:}  high-dimensional correlated binary data;  simulation; computational efficiency; exchangeable; decaying-product; stationary dependent
\vfill

\newpage
\spacingset{1.45} 
\section{Introduction}
\label{sec:intro}
Binary data are commonly observed in many research areas, such as social survey responses, marketing data concerning specific issues with ``yes/no'' questions, responses to treatments in clinical trials, and measurements of genetic or epigenetic variations among  individuals. Usually, multiple variables are collected from an individual in these studies, and correlations among these variables are ubiquitous. Instead of considering each variable individually, it is essential to model the correlated variables together as a multi-dimensional vector \citep{cox1972analysis, carey1993modelling}.

With the improvements of technologies over the past decade, more and more variables are collected and constitute to high-dimensional data. One example is the  genomic data from biomedical research. Through high-throughput genotyping \citep{kennedy2003large} and sequencing technologies \citep{metzker2010sequencing}, millions of genetic variants of an individual can be collected simultaneously. It is well known that nearby genetic variants are often highly correlated, which is known as linkage disequilibrium \citep{pritchard2001linkage}. Hence we cannot simply regard each variant as an independent variable, and the variants constitute a high-dimensional correlated binary vector. Other examples of high-dimensional data include community DNA fingerprints data \citep{wilbur2002variable},
binary questionnaire data \citep{fieuws2006high},
consumer financial history data \citep{diwakar2009data} and consumer behavior data \citep{naik2008challenges}. 

Data simulation can help researchers evaluate the performance of a specific method when real data are difficult to access and/or there is no ground truth about the underlying model/mechanism. We can  explore the properties of the method through analyzing simulated data, such as exploring the small-sample properties of generalized estimating equations (GEE) \citep{hardin2002generalized}. In addition, many statistical estimation methods rely on the generation of random numbers, such as Monte-Carlo methods. With an ever increasing dimension in real world problems, it is essential to develop efficient simulation methods to generate high-dimensional data. In this article, we focus on how to efficiently generate high-dimensional binary data with specified marginal probabilities and correlation structures.

To generate random numbers, a direct method is to express the probability moment function and generate random samples with probabilities equal to the function values. 
Bahadur 
 established a parametric model expressing the joint mass function of binary variables with high-order correlations, but the distribution becomes computationally infeasible when the dimension is high \citep{bahadur1959representation}.
 During the 1990s, a number of more efficient methods were proposed to simulate multi-dimensional binary data. Emrich and Piedmonte  first identified a multivariate normal distribution whose pairwise copulas are linked with the specified correlation matrix \citep{emrich1991method}. The binary variables satisfying marginal and correlation conditions can be subsequently obtained by dichotomizing the identified distribution. This algorithm has been realized in an R package, bindata \citep{leisch1998generation}. One drawback of this method is that the multivariate normal distribution for a given correlation structure may not exist. In addition, a large number of non-linear equations with numerical integrations need to be solved in the algorithm. 
Lee proposed two algorithms based on linear programming to generate binary data with the exchangeable  correlation structure \citep{lee1993generating}. However, even with the structure assumption, a large number of non-linear equations are still needed.  
 Gange proposed to use iterative proportional fitting approach to generate multi-dimensional categorical variables, which is  more general than binary variables \citep{gange1995generating}. The extra iterative procedure makes the method computationally inefficient. This method has been implemented in the R package MultiOrd \citep{demirtas2006method}. 
 Similar with Emrich and Piedmonte's idea, Park et al. proposed to generate correlated binary data by dichotomizing correlated Poisson variables \citep{park1996simple}. The Poisson variables can be generated efficiently by summing independent Poisson variables, which may be shared among dimensions. Park et al.'s method is very efficient when the dimension is low. 
  Nevertheless, when the dimension grows high, it performs slower than the Emrich and Piedomonte's method.
  All the methods above are computationally inefficient when the dimension gets high.
  
 Instead of generating binary data with correlations specified by an arbitrary positive semi-definite matrix, Lunn and Davies proposed algorithms to generate binary data with three common correlation structures, including exchangeable, decaying-product and $1$-dependent \citep{lunn1998note}. The algorithms have linear time complexity with respect to the dimension.
 However, these algorithms only work when all the  marginal probabilities are equal, which limits the applicability of these algorithms. 
 
 Efficient simulating binary variables with unequal probabilities is critical in many applications. For example, let us assume we want to simulate single nucleotide polymorphism (SNP) data, most of which are biallelic \citep{sachidanandam2001map}, in order to study some statistical behavior of a certain genomic method under different correlation structures.  Since the allele frequencies (marginal probabilities) of different SNPs are naturally unequal, it will be unrealistic for us to model them with equal allele frequencies. Therefore, we may need to generate correlated binary data with a certain correlation structure and unequal marginal probabilities.
 The binary variables with unequal probabilities and specified correlation structures are also common in longitudinal study. A study on comparing two treatments for a common toenail infection is provided in Section 1.5.3 and Section 7.1 of \cite{shults2014quasi}, where 294 patients were randomized to one of two treatments, and measured at various time points (baseline, 1, 2, 3, 6, 9, and 12 months post baseline) to determine the presence or absence of a severe toenail infection. Hence, the outcome variables are dichotomous. They assumed an exchangeable correlation structure in each group of same treatment, and the probabilities of each time points are obviously unequal.
 In addition, this kind of data are commonly used in the GEE method, where the working correlation matrix are often specified as the forms we mentioned in the paper, and the probabilities are not necessarily equal.

In this paper, we generalize Lunn and Davies' algorithms  to generate high-dimensional correlated binary data with varied marginal probabilities. In line with their work, we first focus on three commonly used correlation structures including exchangeable, decaying-product, as well as $K$-dependent correlations.
Our algorithms have linear time complexity with respect to the dimension. Besides, we generalize the method on $K$-dependent structure and extend the applicability to general non-negative correlation matrices. Although the time complexity has been augmented to quadratic, the algorithm is still extremely efficient compared to existing approaches. We combine and implement these algorithms in an R package CorBin, which is publicly available on the Comprehensive R Archive Network (CRAN).
Compared with existing binary data generation packages, our package spends around $\frac{1}{700,000}$ of the time used in bindata and $\frac{1}{320,000}$ of MultiOrd when generating a 100-dimensional binary data with exchangeable structure. The ratios reach around $\frac{1}{4,500,000}$ and $\frac{1}{2,060,000}$ when the dimension increases to 500. Similar speed up is  also observed for the other two correlation structures. 
For generating data with general correlation matrices, the ratios become around $\frac{1}{680}$ and $\frac{1}{320}$ of bindata and MultiOrd, respectively.  
The package is easy to use, and a pdf document (\code{CorBin-manual.pdf}, Supplementary Material) is  provided to illustrate the usage for readers.

\section{Models and algorithms}
\label{sec:meth}
In this section, we propose  algorithms for generating an $m$-dimensional random binary vector $\bm{X} = (X_1, X_2, \cdots, X_m)'$ with several correlation structures, where $X_i$ follows the Bernoulli distribution with marginal probability $p_i$, $i=1,2,\cdots,m.$ Compared with algorithms proposed in Lunn and Davies (1998), here we do not require all marginal probabilities to be equal, which extends the applicability  of the algorithms. In the following, we denote the correlation between $X_i$ and $X_j$ as  $r_{ij}$ ($1\leq i\leq j \leq n$). We assume  the correlations are non-negative, i.e. $r_{ij}\geq 0$, and use $\bm{R}$ to represent the correlation matrix constituted from $r_{ij}$. We also assumed all of the Bernoulli random variables used in each algorithm are generated independently.

\subsection{Natural restrictions  for correlated binary variables}\label{sec:restiction}
In this subsection, we present the natural restrictions  for any correlated binary variables must satisfy. First, the correlation matrix $\bm{R}$ should be positive definite. This imposes the restrictions on the correlation coefficients within $\bm{R}$ matrices. Taking  the $\bm{R}$  matrix presented in equation \eqref{1dep} (Section 2.4) as an example, $\rho_1$ must take values in $(-\frac{1}{c_m},\frac{1}{c_m})$ to satisfy the positive definite restriction, where $c_m=2sin(\frac{\pi(m-1)}{2(m+1)})$ \citep{shults2014quasi}. When $m \to +\infty$,  the constraints will approach $(-0.5,0.5)$.

Second, natural constraints of $\bm{R}$ are also imposed by the marginal expectations. Let's consider a simple bivariate example, in which $p_1, p_2=0.1, 0.4$, and $\rho=0.9$. The corresponding correlation matrix is obviously positive definite. However, $P(Y_1=1,Y_2=1)=0.1\times 0.4+0.9\sqrt{0.1\times 0.4\times (1-0.1)\times (1-0.4) }=0.172$; $P(Y_1=1,Y_2=0)=P(Y_1=1)-P(Y_1=1,Y_2=1)=0.1-0.172=-0.072<0$, which leads to an invalid probability. Prentice formulated the constraints imposed for marginal expectations and correlation coefficients, which are known as the Prentice constraints \citep{prentice1988correlated}. Specifically, under multivariate binary distribution, the correlation between any two dimensions $i, j$ should satisfy
\begin{equation}
    \max\left\{-\sqrt{\frac{p_i(1-p_j)}{p_j(1-p_i)}},
    -\sqrt{\frac{p_j(1-p_i)}{p_i(1-p_j)}}\right\}\leq r_{ij}\leq \min\left\{\sqrt{\frac{p_i(1-p_j)}{p_j(1-p_i)}},
    \sqrt{\frac{p_j(1-p_i)}{p_i(1-p_j)}}\right\}.
\end{equation}
For the  non-negative correlation structure, each element of the correlation matrix should satisfy $r_{ij}\in \left[0,\min\left\{\sqrt{\frac{p_i(1-p_j)}{p_j(1-p_i)}},
\sqrt{\frac{p_j(1-p_i)}{p_i(1-p_j)}}\right\}\right]$, based on the Prentice constraints. In our R package, we will first check whether the input parameters satisfy the Prentice constraints. If not, we will print out a warning message with exact values of the Prentice constraints for users.

Please note that the Prentice constraints and the positive definiteness are the necessary but insufficient conditions to guarantee the existence of the multivariate binary distributions with the specified correlation structures.
Actually, as we later presented in Section 2.2 and 2.3, for exchangeable and decaying-product correlation structure, the Prentice constraints are sufficient to guarantee the existence of the distributions. However, it is not true for all correlation structures. For example, 
\cite{chaganty2006range} presented a simple example with a 1-dependent
correlation structure (detailed in Section 2.4), in which the Prentice constraints were satisfied but the corresponding distribution is not valid.
Further, Shults and Hilbe provided a brief review on additional constraints for correlated binary data \citep{shults2014quasi}.

In following subsections, we provide implementation details and related properties of algorithms to generate binary data with different correlation structures.  

\subsection{Exchangeable correlation structure}\label{exchange}
The exchangeable correlation structure is one of the most commonly used structures in data simulation, and is regarded as the default setting in most binary data generation packages. In this case,  every pair of observations on a specific unit has the same correlation, i.e., 
\begin{equation}
r_{ij}=\left\{
\begin{aligned}
& \rho , \quad i \neq j \\
& 1 , \quad i = j\\
\end{aligned}
\right..
\end{equation}
The correlation matrix of the exchangeable correlation structures is:
\begin{equation}\label{corrmatrixEX}
\bm{R}={
\left( \begin{array}{ccccc}
1 & \rho & \rho &\cdots&\rho\\
\rho & 1 & \rho &\cdots&\rho\\
\rho & \rho & 1&\cdots&\rho\\
\vdots&\vdots&\vdots&\ddots&\vdots\\
\rho&\rho&\rho&\cdots&1
\end{array} 
\right )}.
\end{equation}
We use $p_{min}$ and $p_{max}$ to denote the minimal and maximal values in the desired marginal probabilities, i.e., 
\begin{equation}
    p_{min}=\min\limits_{k\in\{1,\cdots,m\}}{p_k},\quad
    p_{max}=\max\limits_{k\in\{1,\cdots,m\}}{p_k}.
\end{equation}

An intuitive thinking to generate binary data with exchangeable structure is to make each variable $X_i$ taking the linear combination form $(1-U_i)Y_i+U_iZ$, where $U_i\sim Bern(\alpha_i)$, $Y_i\sim Bern(\beta_i)$, $Z\sim Bern(\gamma)$, and all of the random variables are mutually independent. A careful selection of $\alpha_i$, $\beta_i$ and $\gamma$ is needed to make the constructed variables having specified marginal probabilities and correlations. We describe the constructions in detail in Algorithm \ref{alg1}. 

In the following, we first show the justification of the algorithm, i.e., if $\alpha_i$, $\beta_i$ and $\gamma$ lies in $\left[0,1\right]$, the binary data generated from Algorithm \ref{alg1} have specified marginal expectations and exchangeable correlation structure. After that, we prove that if the data to be generated satisfy the necessary condition of their existence, namely the Prentice constraints,   the construction of Algorithm \ref{alg1} can guarantee that the parameters lie in  $\left[0,1\right]$.

\begin{algorithm}[H]
\caption{Generate binary data under the exchangeable correlation structure.} 
\label{alg1}
\hspace*{0.02in} {\bf Input:}
The expected values of the Bernoulli random variables $p_1,p_2,\cdots,p_m$; and the correlation coefficient $\rho$ \\
\hspace*{0.02in} {\bf Output:}
The correlated binary variables $X_1,X_2,\cdots,X_m$
\begin{algorithmic}[1]
\makeatletter
\setcounter{ALG@line}{-1}
\makeatother
\State Check whether the input satisfies the Prentice constraints.
\State $\gamma=\frac{\sqrt{p_{min}p_{max}}}{\sqrt{p_{min}p_{max}}+\sqrt{(1-p_{min})(1-p_{max})}}$
\State Generate $Z\sim Bern(\gamma)$
\For{$i=1,2,\cdots,m$}
    \State $\alpha_i=\sqrt{\frac{\rho p_i(1-p_i)}{\gamma(1-\gamma)}}$
    \State $\beta_i=\frac{p_i-\alpha_i \gamma}{1-\alpha_i}$
    \State Generate $U_i\sim Bern(\alpha_i)$
    \State Generate $Y_i\sim Bern(\beta_i)$
    \State $X_i=(1-U_i)Y_i+U_iZ$
\EndFor
\State\Return $X_1,X_2,\cdots,X_m$
\end{algorithmic}
\end{algorithm}




\begin{thm}\label{exTHM_cr} 
    If intermediate variables $\alpha_1$, $\cdots$, $\alpha_m$, $\beta_1$, $\cdots$, $\beta_m$, $\gamma \in \left[0,1\right]$, Algorithm \ref{alg1} returns the binary data  $X_1, \cdots,X_m$ with marginal probabilities $p_1,\cdots,p_m$ and common correlation $\rho$.
\end{thm}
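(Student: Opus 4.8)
The plan is to verify the two defining properties of the output directly from the construction $X_i = (1-U_i)Y_i + U_i Z$, exploiting the mutual independence of all the Bernoulli draws. The hypothesis that $\alpha_i,\beta_i,\gamma \in [0,1]$ is used only to guarantee that $U_i \sim Bern(\alpha_i)$, $Y_i \sim Bern(\beta_i)$ and $Z \sim Bern(\gamma)$ are genuinely well-defined random variables; once this is granted, everything reduces to moment computations. I would first observe that $X_i$ takes values in $\{0,1\}$: when $U_i=0$ it equals $Y_i$ and when $U_i=1$ it equals $Z$, both of which are $0/1$-valued. Hence each $X_i$ is automatically a Bernoulli variable, and it suffices to pin down $E[X_i]$ and the pairwise covariances.

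For the marginals, I would compute $E[X_i] = (1-\alpha_i)\beta_i + \alpha_i\gamma$ by independence, then substitute $\beta_i = (p_i-\alpha_i\gamma)/(1-\alpha_i)$; the factor $(1-\alpha_i)$ cancels and the expression collapses to $p_i$. This shows $X_i \sim Bern(p_i)$ and, in particular, $Var(X_i)=p_i(1-p_i)$. Notably, this step never uses the explicit formula for $\gamma$ from line~1 of the algorithm, only that the substituted $\beta_i$ is a legitimate probability.

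The heart of the argument is the cross moment. For $i\neq j$ I would expand $X_iX_j = [(1-U_i)Y_i + U_i Z][(1-U_j)Y_j + U_j Z]$ into four terms, use $Z^2 = Z$ on the last one, and take expectations term by term using independence. Writing $a_i=(1-\alpha_i)\beta_i$ and $b_i=\alpha_i\gamma$ so that $p_i = a_i+b_i$, the four expectations assemble into $a_ia_j + a_ib_j + b_ia_j + \alpha_i\alpha_j\gamma$. Completing the product $(a_i+b_i)(a_j+b_j)=p_ip_j$ and subtracting the missing $b_ib_j = \alpha_i\alpha_j\gamma^2$ term yields the clean identity $E[X_iX_j] = p_ip_j + \alpha_i\alpha_j\gamma(1-\gamma)$, and consequently $Cov(X_i,X_j) = \alpha_i\alpha_j\gamma(1-\gamma)$.

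Finally, I would substitute $\alpha_i = \sqrt{\rho p_i(1-p_i)/[\gamma(1-\gamma)]}$, so that $\alpha_i\alpha_j\gamma(1-\gamma) = \rho\sqrt{p_i(1-p_i)p_j(1-p_j)}$; dividing by $\sqrt{Var(X_i)Var(X_j)} = \sqrt{p_i(1-p_i)p_j(1-p_j)}$ gives $Corr(X_i,X_j)=\rho$ for every $i\neq j$, which is precisely the exchangeable structure. I expect the only genuinely error-prone step to be the bookkeeping in the four-term expansion of $E[X_iX_j]$, in particular remembering that $Z$ is shared across dimensions, so the $U_iU_jZ^2$ term contributes $\alpha_i\alpha_j\gamma$ rather than $\alpha_i\alpha_j\gamma^2$. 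This asymmetry is exactly what produces the $\gamma(1-\gamma)$ factor that later cancels against the denominator of $\alpha_i\alpha_j$; everything else is routine algebra.
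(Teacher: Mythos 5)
Your proposal is correct and follows essentially the same route as the paper's proof: verify $E[X_i]=p_i$ by direct substitution of $\beta_i$, then show $\mathrm{Cov}(X_i,X_j)=\alpha_i\alpha_j\gamma(1-\gamma)$ and divide by the standard deviations to recover $\rho$. The paper merely compresses your four-term expansion by invoking bilinearity of covariance to reduce $\mathrm{cov}(X_i,X_j)$ to $\mathrm{cov}(U_iZ,U_jZ)$ in one step, but the underlying computation is identical.
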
 
\begin{proof}
$X_i$ is a binary variable with the support \{0,1\}.
For any $i\in \{1,2,\cdots,m\}$, \begin{equation}\label{a1}
\begin{aligned}
  EX_i&=E((1-U_i)Y_i+U_iZ)=E(1-U_i)EY_i+EU_iEZ\\
  &=(1-\alpha_i)\beta_i+\alpha_i\gamma=(1-\alpha_i)\frac{p_i-\alpha_i \gamma}{1-\alpha_i}+\alpha_i \gamma=p_i,
\end{aligned}
\end{equation}

For any $i\neq j \in \{1,2,\cdots,m\}$
\begin{equation}\label{a2}
\begin{aligned}
r_{ij}&=\frac{cov(X_i,X_j)}{\sqrt{Var(X_i)Var(X_j)}} =\frac{cov((1-U_i)Y_i+U_iZ,(1-U_j)Y_j+U_jZ)}{\sqrt{p_ip_j(1-p_i)(1-p_j)}} \\
&=\frac{cov(U_iZ,U_jZ)}{\sqrt{p_ip_j(1-p_i)(1-p_j)}}
=\frac{\alpha_i\alpha_j\gamma(1-\gamma)}{\sqrt{p_ip_j(1-p_i)(1-p_j)}}=\rho.
\end{aligned}
\end{equation}

Thus, the algorithm returns the variables satisfying the required  marginal and correlation  conditions we provide.
\end{proof}

In the following theorem, we show that if the specified  binary data with non-negative exchangeable structure satisfies the Prentice constraints, the probability parameters $\alpha_i$, $\beta_i$ and $\gamma$ will automatically lie in the range of $\left[0,1\right]$.

\begin{thm}\label{exTHM} 
If the specified binary data with non-negative exchangeable correlation structure satisfy the Prentice constraints, the constructions in Algorithm \ref{alg1} guarantee that $\alpha_1$, $\cdots$, $\alpha_m$, $\beta_1$, $\cdots$, $\beta_m$, $\gamma$ lie in the range of  $\left[0, 1\right]$.
\end{thm}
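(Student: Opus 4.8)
The plan is to collapse all six families of bounds to a single clean inequality by reparametrizing through the odds. I would set $s_k=\sqrt{p_k/(1-p_k)}$ for each $k$, so that $s_k$ is strictly increasing in $p_k$; in particular $s_{min}:=\sqrt{p_{min}/(1-p_{min})}$ and $s_{max}:=\sqrt{p_{max}/(1-p_{max})}$ are the smallest and largest of the $s_k$. Two preliminary computations drive everything. First, each exchangeable pairwise Prentice bound $\min\{\sqrt{p_i(1-p_j)/(p_j(1-p_i))},\sqrt{p_j(1-p_i)/(p_i(1-p_j))}\}$ equals $\min\{s_i/s_j,s_j/s_i\}$, and since $\rho$ must not exceed the bound for every pair, the most restrictive one is decisive and the effective hypothesis is $0\le\rho\le s_{min}/s_{max}$. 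Second, substituting $\sqrt{p_{min}p_{max}}=s_{min}s_{max}/\sqrt{(1+s_{min}^2)(1+s_{max}^2)}$ and $\sqrt{(1-p_{min})(1-p_{max})}=1/\sqrt{(1+s_{min}^2)(1+s_{max}^2)}$ into the definition of $\gamma$ collapses it to $\gamma=t/(1+t)$ with $t:=s_{min}s_{max}$, whence $1-\gamma=1/(1+t)$ and $\gamma(1-\gamma)=t/(1+t)^2$. This closed form is the crux that makes the remaining steps routine.

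With these in hand, $\gamma\in[0,1]$ is immediate, and $\alpha_i\ge 0$ follows because $\rho\ge 0$ and every factor in $\alpha_i^2=\rho p_i(1-p_i)/(\gamma(1-\gamma))$ is non-negative. For $\alpha_i\le 1$ I would insert the reparametrization and use $\rho\le s_{min}/s_{max}$ to get $\alpha_i^2\le s_i^2(1+s_{min}s_{max})^2/(s_{max}^2(1+s_i^2)^2)$. Taking square roots, the desired $\alpha_i\le 1$ is equivalent to $s_i(1+s_{min}s_{max})\le s_{max}(1+s_i^2)$, and rearranging the difference gives $(s_i-s_{max})+s_i s_{max}(s_{min}-s_i)\le 0$, a sum of two non-positive terms because $s_{min}\le s_i\le s_{max}$.

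For $\beta_i$ I would use that, when $\alpha_i<1$, the sign conditions $\beta_i\ge 0$ and $\beta_i\le 1$ are respectively $\alpha_i\gamma\le p_i$ and $\alpha_i(1-\gamma)\le 1-p_i$. Squaring and cancelling with the explicit $\gamma$ turns $\beta_i\ge 0$ into $\rho\le s_i^2/(s_{min}s_{max})$ and $\beta_i\le 1$ into $\rho\le s_{min}s_{max}/s_i^2$; each follows from the hypothesis $\rho\le s_{min}/s_{max}$ together with $s_{min}\le s_i$ for the first and $s_i\le s_{max}$ for the second.

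I anticipate the main obstacle to be twofold: obtaining and trusting the closed form $\gamma(1-\gamma)=t/(1+t)^2$, whose algebra is short but requires carrying the radicals carefully, and pinning down the exact global Prentice bound for the exchangeable case, since it is the extremal pairwise bound $s_{min}/s_{max}$ rather than any single generic term. The only loose end is the degenerate case $\alpha_i=1$, which a quick check shows can occur only when all $p_k$ are equal and $\rho$ attains its maximum; there $\beta_i$ is vacuous because the component $Y_i$ never enters $X_i$, so it may be set arbitrarily and the closed-interval claim still holds.
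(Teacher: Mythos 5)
Your proof is correct, and it follows the same overall architecture as the paper's: first $\gamma\in(0,1)$, then $\alpha_i\le 1$, then $\beta_i\in[0,1]$ reduced to the two conditions $\alpha_i\gamma\le p_i$ and $\alpha_i(1-\gamma)\le 1-p_i$, with the binding Prentice bound identified as $\rho\le\sqrt{p_{min}(1-p_{max})/(p_{max}(1-p_{min}))}$ (your $s_{min}/s_{max}$). Where you differ is the algebraic engine. The paper never computes $\gamma$ in closed form: it lower-bounds $\gamma(1-\gamma)\ge\sqrt{p_{min}p_{max}(1-p_{min})(1-p_{max})}$ via Cauchy--Schwarz, chains $\rho\,p_i(1-p_i)\le\rho\,p_{max}(1-p_{min})\le\gamma(1-\gamma)$ for $\alpha_i\le 1$, and handles $\beta_i$ by a monotonicity argument showing $\rho\le\min\{\frac{1-p_i}{p_i}\frac{\gamma}{1-\gamma},\frac{p_i}{1-p_i}\frac{1-\gamma}{\gamma}\}$. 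Your square-root-odds substitution instead yields the exact identity $\gamma=t/(1+t)$ with $t=s_{min}s_{max}$, which turns every bound into a one-line polynomial inequality in $s_{min}\le s_i\le s_{max}$ (e.g.\ $(s_i-s_{max})+s_is_{max}(s_{min}-s_i)\le 0$); this buys transparency and makes it visible exactly when each inequality is tight. You also explicitly dispose of the degenerate case $\alpha_i=1$ (all $p_k$ equal and $\rho$ at its maximum), where $\beta_i$ is a $0/0$ expression; the paper's chain of equivalences in its $\beta_i$ step implicitly divides by $1-\alpha_i$ and silently skips this case, so your treatment is slightly more careful on that point.
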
 
\begin{proof}
By definition, 
\begin{equation}
\label{eq:thm22_1}
    \gamma=\frac{\sqrt{p_{min}p_{max}}}{\sqrt{p_{min}p_{max}}+\sqrt{(1-p_{min})(1-p_{max})}}.
\end{equation}
Since $0 \leq \sqrt{p_{min}p_{max}} \leq \sqrt{p_{min}p_{max}}+\sqrt{(1-p_{min})(1-p_{max})}$, we have $0< \gamma< 1$.

Meanwhile, 
\begin{equation}\label{thm2.2_2}
    \gamma(1-\gamma)=\frac{\sqrt{p_{min}p_{max}(1-p_{min})(1-p_{max})}}{(\sqrt{p_{min}p_{max}}+\sqrt{(1-p_{min})(1-p_{max})})^2}.
\end{equation}
From  $Cauchy$-$Schwarz$ inequality, 
\begin{equation}\label{cs}
    (\sqrt{p_{min}p_{max}}+\sqrt{(1-p_{min})(1-p_{max})})^2\leq (p_{min}+1-p_{min})(p_{max}+1-p_{max})=1.
\end{equation}
According to \eqref{thm2.2_2} and \eqref{cs}, we have $\gamma(1-\gamma)\geq\sqrt{p_{min}p_{max}(1-p_{min})(1-p_{max})}.$

For any $i \in \{1,2,\cdots,m\}$, $\alpha_i$ is obviously non-negative.
With the Prentice constraints, we have 
\begin{equation}
    \label{pren.ex}
    \rho\leq \sqrt{\frac{p_{min}(1-p_{max})}{p_{max}(1-p_{min})}}.
\end{equation}
Therefore,
\begin{equation}
\begin{aligned}
  \rho p_i(1-p_i) &\leq
    \rho p_{max}(1-p_{min})\leq
    \sqrt{\frac{p_{min}(1-p_{max})}{p_{max}(1-p_{min})}}p_{max}(1-p_{min})\\
    &=\sqrt{p_{min}p_{max}(1-p_{min})(1-p_{max})}\leq
    \gamma(1-\gamma).
\end{aligned}
\end{equation}

Hence,  we have
\begin{equation}
\label{eq:thm22_2}
    \alpha_i=\sqrt{\frac{\rho p_i(1-p_i)}{\gamma(1-\gamma)}}\leq 1.
\end{equation}
  
On the other hand, due to the fact that $p_{min}\leq p_i\leq p_{max}$,
\begin{equation}
\begin{aligned}
    &\frac{1-p_{max}}{p_{max}}\leq\frac{1-p_i}{p_i}
    \iff \sqrt{\frac{p_{min}(1-p_{max})}{p_{max}(1-p_{min})}}\leq \frac{1-p_i}{p_i}\frac{\sqrt{p_{min}p_{max}}}{\sqrt{(1-p_{min})(1-p_{max})}}=\frac{1-p_i}{p_i}\frac{\gamma}{1-\gamma},\\
    \text{and}\\
    &\frac{p_{min}}{1-p_{min}}\leq\frac{p_i}{1-p_i}
    \iff
    \sqrt{\frac{p_{min}(1-p_{max})}{p_{max}(1-p_{min})}}\leq \frac{p_i}{1-p_i}\frac{\sqrt{(1-p_{min})(1-p_{max})}}{\sqrt{p_{min}p_{max}}}=\frac{p_i}{1-p_i}\frac{1-\gamma}{\gamma},
\end{aligned}
\end{equation}
i.e.,
\begin{equation}
    \sqrt{\frac{p_{min}(1-p_{max})}{p_{max}(1-p_{min})}}\leq
    \min\{\frac{1-p_i}{p_i}\frac{\gamma}{1-\gamma},\frac{p_i}{1-p_i}\frac{1-\gamma}{\gamma}\}.
\end{equation}
Combined with inequality \eqref{pren.ex}, 
\begin{equation}\label{eq:thmpr1}
\begin{aligned}
  \rho\leq \sqrt{\frac{p_{min}(1-p_{max})}{p_{max}(1-p_{min})}}\leq \min\{\frac{1-p_i}{p_i}\frac{\gamma}{1-\gamma},\frac{p_i}{1-p_i}\frac{1-\gamma}{\gamma}\}.
\end{aligned}
\end{equation}
Then we can derive that
\begin{equation}\label{eq:thmpr2}
\begin{aligned}
 \rho\leq\frac{1-p_i}{p_i}\frac{\gamma}{1-\gamma}
\iff&
    \frac{\rho p_i(1-p_i)}{\gamma(1-\gamma)}\leq \frac{(1-p_i)^2}{(1-\gamma)^2}
  \iff
  \alpha_i=\sqrt{\frac{\rho p_i(1-p_i)}{\gamma(1-\gamma)}}\leq\frac{1-p_i}{1-\gamma}\\
  \iff&
  p_i-\alpha_i\gamma\leq 1-\alpha_i
  \iff
\beta_i=\frac{p_i-\alpha_i\gamma}{1-\alpha_i}\leq1.
\end{aligned}
\end{equation}
  
Similarly, 
\begin{equation}\label{eq:thmpr3}
\begin{aligned}
 \rho\leq\frac{p_i}{1-p_i}\frac{1-\gamma}{\gamma}
     \iff&
         \frac{\rho p_i(1-p_i)}{\gamma(1-\gamma)}\leq\frac{p_i^2}{\gamma^2}
         \iff
             \alpha_i=\sqrt{\frac{\rho p_i(1-p_i)}{\gamma(1-\gamma)}}\leq\frac{p_i}{\gamma}\\
             \iff&    
                 p_i-\alpha_i\gamma\geq 0 
                 \iff
    \beta_i\geq 0.
\end{aligned}
\end{equation} 
Combining \eqref{eq:thm22_1}, \eqref{eq:thm22_2}, \eqref{eq:thmpr2} and \eqref{eq:thmpr3}, we finish the proof.  
\end{proof}

\subsection{Decaying-product correlation structure}\label{auto}

The commonly used first order autoregressive (AR(1)) correlation structure is a special case of the decaying-product correlation structure, where the correlations are highest for adjacent variables and decrease in the power of  the distance between dimension indices. The correlation matrix of AR(1) is as follows: 

\begin{equation}
\bm{R}={
\left( \begin{array}{ccccc}
1 & \rho & \rho^2 &\cdots&\rho^{n-1}\\
\rho & 1 & \rho &\cdots&\rho^{n-2}\\
\rho^2 & \rho & 1&\cdots&\rho^{n-3}\\
\vdots&\vdots&\vdots&\ddots&\vdots\\
\rho^{n-1}&\rho^{n-2}&\rho^{n-3}&\cdots&1
\end{array} 
\right )}.
\end{equation}

Here we consider the more general decaying-product correlation structure. We allow  the marginal probabilities to be freely specified. Given the elements on the minor diagonal of the correlation matrix $\bm{\rho}=(\rho_1,\cdots,\rho_{n-1})$, the correlation between any two variables under this structure can be expressed as:
\begin{equation}
    r_{j,k}=\prod_{l=j}^{k-1}\rho_{l},\quad j<k,
\end{equation}

i.e.,
\begin{equation}
\bm{R}={
\left( \begin{array}{cccccc}
1 & \rho_1 & \rho_1\rho_2 &\rho_1\rho_2\rho_3 &\cdots&\prod_{l=1}^{n-1}\rho_{l}\\
\rho_1 & 1 & \rho_2 &\rho_2\rho_3 &\cdots&\prod_{l=2}^{n-1}\rho_{l}\\
\rho_1\rho_2 & \rho_2 & 1&\rho_3&\cdots&\prod_{l=3}^{n-1}\rho_{l}\\
\rho_1\rho_2\rho_3 & \rho_2\rho_3 &\rho_3& 1&\cdots&\prod_{l=4}^{n-1}\rho_{l}\\
\vdots&\vdots&\vdots&\vdots&\ddots&\vdots\\
\prod_{l=1}^{n-1}\rho_{l}&\prod_{l=2}^{n-1}\rho_{l}&\prod_{l=3}^{n-1}\rho_{l}&\prod_{l=4}^{n-1}\rho_{l}&\cdots&1
\end{array} 
\right )}.
\end{equation}

Similar with the construction method in the previous subsection, we also assume the variable, $X_i$ takes the linear combination form $(1-U_i)Y_i+U_iX_{i-1}$, where $U_i\sim Bern(\alpha_i)$, $Y_i\sim Bern(\beta_i)$, and all of the random variables are mutually independent. Here the construction of $X_i$ uses the information of $X_{i-1}$,  bringing the correlation between adjacent elements. Similar to the exchangeable correlation structure, we make a subtle construction of $\alpha_i$ and $\beta_i$ in order that the constructed variables have specified marginal probability and correlations.
We describe the constructions in detail in Algorithm \ref{alg2}. In the following, we  will first show the justification of the algorithm in Theorem \ref{arTHM_cr} and prove that the Prentice constraints are enough to guarantee the intermediate parameters in the interval $\left[0,1\right]$ in Theorem \ref{arTHM}.

\begin{algorithm}[H]
\caption{Generate binary data under decaying-product correlation structure.} 
\label{alg2}
\hspace*{0.02in} {\bf Input:}
The expected values of the Bernoulli random variables $p_1,p_2,\cdots,p_m$; and the off-diagonal correlation vector $\bm{\rho}$ \\
\hspace*{0.02in} {\bf Output:}
The correlated binary variables $X_1,X_2,\cdots,X_m$
\begin{algorithmic}[1]
    \makeatletter
\setcounter{ALG@line}{-1}
\makeatother
\State Check whether the input satisfies the Prentice constraints.
\State Generate $X_1\sim Bern(p_1)$
\For{$i=2,\cdots,m$}
    \State $\alpha_i=\rho_{i-1}\sqrt{\frac{p_i(1-p_i)}{p_{i-1}(1-p_{i-1})}}$
    \State $\beta_i=\frac{p_i-\alpha_i p_{i-1}}{1-\alpha_i}$
    \State
    Generate $U_i\sim Bern(\alpha_i)$
    \State
    Generate $Y_i\sim Bern(\beta_i)$
    \State
    $X_i = (1-U_i)Y_i+U_iX_{i-1}$
\EndFor
\State\Return $X_1,X_2,\cdots,X_m$
\end{algorithmic}
\end{algorithm}

\begin{thm}\label{arTHM_cr} 
    If intermediate variables $\alpha_2$, $\cdots$, $\alpha_m$, $\beta_2$, $\cdots$, $\beta_m \in \left[0,1\right]$, Algorithm \ref{alg2} returns the binary data with the specified marginal probabilities and decaying-product  correlation.
\end{thm}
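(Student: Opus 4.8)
The plan is to follow the template of the proof of Theorem~\ref{exTHM_cr}, verifying in turn that each $X_i$ takes values in $\{0,1\}$, that $EX_i=p_i$, and that the pairwise correlations obey the product rule $r_{j,k}=\prod_{l=j}^{k-1}\rho_l$ for $j<k$. First I would observe that $X_i=(1-U_i)Y_i+U_iX_{i-1}$ equals $Y_i$ when $U_i=0$ and $X_{i-1}$ when $U_i=1$; since $Y_i\in\{0,1\}$ and, by induction, $X_{i-1}\in\{0,1\}$, each $X_i$ is binary.

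For the marginals I would induct on $i$, the base case being $EX_1=p_1$. Because $U_i$ is independent of both $Y_i$ and $X_{i-1}$, I would expand $EX_i=(1-\alpha_i)\beta_i+\alpha_i\,EX_{i-1}$, insert the inductive hypothesis $EX_{i-1}=p_{i-1}$ together with $\beta_i=(p_i-\alpha_ip_{i-1})/(1-\alpha_i)$, and cancel to obtain $EX_i=p_i$, in exact analogy with \eqref{a1}.

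The correlation statement is the heart of the argument, and the key device is a one-step covariance recurrence. Fix $j<k$. Since the variables $U_k,Y_k$ are drawn fresh at step $k$, they are independent of everything generated through step $k-1$, in particular of $X_j$ and $X_{k-1}$. Hence the summand $(1-U_k)Y_k$ contributes nothing to $cov(X_j,X_k)$, and pulling the independent factor $U_k$ out of the remaining term gives $cov(X_j,X_k)=\alpha_k\,cov(X_j,X_{k-1})$. In the adjacent case $k=j+1$ this reads $cov(X_{j+1},X_j)=\alpha_{j+1}Var(X_j)=\alpha_{j+1}p_j(1-p_j)$, which with the definition of $\alpha_{j+1}$ immediately yields $r_{j,j+1}=\rho_j$. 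Iterating the recurrence down to $cov(X_j,X_j)=Var(X_j)$ produces $cov(X_j,X_k)=\left(\prod_{l=j+1}^{k}\alpha_l\right)p_j(1-p_j)$.

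It then remains to simplify this product. Substituting $\alpha_l=\rho_{l-1}\sqrt{p_l(1-p_l)/[p_{l-1}(1-p_{l-1})]}$, the variance ratios telescope to $\sqrt{p_k(1-p_k)/[p_j(1-p_j)]}$ while the reindexed correlation factors give $\prod_{l=j}^{k-1}\rho_l$. Dividing by $\sqrt{Var(X_j)Var(X_k)}=\sqrt{p_j(1-p_j)p_k(1-p_k)}$ cancels the surviving square roots and leaves $r_{j,k}=\prod_{l=j}^{k-1}\rho_l$, as required. I expect the main difficulty to lie in this telescoping bookkeeping---keeping the index ranges of the $\rho_l$ and of the variance ratios aligned---and in making the independence used in the covariance recurrence precise, since $X_j$ and $X_{k-1}$ share the randomness of steps $1$ through $j$ while $U_k,Y_k$ are fresh; the recurrence is essentially the Markov property that $X_k$ enters the joint law only through $X_{k-1}$.
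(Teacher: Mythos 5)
Your proposal is correct and follows essentially the same route as the paper: induction on $i$ for the marginals using $EX_i=(1-\alpha_i)\beta_i+\alpha_iEX_{i-1}$, and for the correlations the observation that only the term $U_k\cdots U_{j+1}X_j$ survives in $cov(X_j,X_k)$, giving the product $\prod_{l=j+1}^{k}\alpha_l\,Var(X_j)$ that telescopes to $\prod_{l=j}^{k-1}\rho_l$ after normalization. Your one-step covariance recurrence is just a cleaner packaging of the unrolled covariance the paper writes directly, so the two arguments are the same in substance.
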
 
\begin{proof}
In  Algorithm \ref{alg2}, $EX_1=p_1$ is naturally satisfied. Now we prove $EX_i=p_i$ ($i=2,\cdots,m$) by induction. For any $i \in \{2,\cdots,m \}$, assuming that $EX_{i-1}=p_{i-1}$, then
\begin{equation}
    EX_i=E((1-U_i)Y_i+U_iX_{i-1})=(1-\alpha_i)\beta_i +\alpha_iEX_{i-1}
    =p_i-\alpha_ip_{i-1}+\alpha_ip_{i-1}=p_i.
\end{equation}
For any $k \in \{1,2,\cdots,m-1\}$ and $i \in \{1,2,\cdots,m-k\}$,
\begin{equation}
\begin{aligned}
  r_{i,i+k}&=\frac{cov(X_i,X_{i+k})}{\sqrt{Var(X_i)Var(X_{i+k})}}
  =\frac{cov(X_i,U_{i+k}U_{i+k-1}\cdots U_{i+1}X_{i})}{\sqrt{p_ip_{i+k}(1-p_i)(1-p_{i+k})}}\\
  &=\alpha_{i+k}\cdots\alpha_{i+1}\sqrt{\frac{p_i(1-p_i)}{p_{i+k}(1-p_{i+k})}}\\
  &=\rho_{i}\rho_{i+1}\cdots\rho_{i+k-1}.
\end{aligned}
\end{equation}
Thus the decaying-product correlation structure holds.
\end{proof}

\begin{thm}\label{arTHM}
    If the specified binary data with non-negative decaying-product correlation structure satisfy the Prentice constraints, the constructions in Algorithm \ref{alg2} guarantee that $\alpha_2$, $\cdots$, $\alpha_m$, $\beta_2$, $\cdots$, $\beta_m$ lie in the range of $\left[0,1\right]$.
\end{thm}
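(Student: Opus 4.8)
The plan is to fix an index $i \in \{2,\dots,m\}$ and verify the membership for $\alpha_i$ and $\beta_i$ one index at a time, using only the pairwise Prentice constraint for the adjacent pair $(i-1,i)$. The decaying-product structure is what makes this clean: the $(i-1,i)$ entry of $\bm R$ is exactly $\rho_{i-1}$, so the general constraint specializes to
\[
0 \le \rho_{i-1} \le \min\left\{\sqrt{\tfrac{p_{i-1}(1-p_i)}{p_i(1-p_{i-1})}},\ \sqrt{\tfrac{p_i(1-p_{i-1})}{p_{i-1}(1-p_i)}}\right\},
\]
and no global quantities such as $p_{min},p_{max}$ are needed, which is why this is cleaner than the exchangeable case. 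Writing $a=p_{i-1}$ and $b=p_i$, I abbreviate the two upper bounds as $T_1=\sqrt{a(1-b)/(b(1-a))}$ and $T_2=\sqrt{b(1-a)/(a(1-b))}=1/T_1$, and introduce the auxiliary target $T_0=\sqrt{a(1-a)/(b(1-b))}$, noting that $\alpha_i=\rho_{i-1}/T_0$.

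The bound $\alpha_i\ge 0$ is immediate because $\rho_{i-1}\ge 0$ and each $p_k\in(0,1)$. The hard part, and really the only nontrivial step, is $\alpha_i\le 1$, which is equivalent to $\rho_{i-1}\le T_0$. Here I would run a short case split: a one-line computation on squares gives $T_1\le T_0 \iff a\le b$ and $T_2\le T_0 \iff b\le a$, while the same kind of computation gives $T_1\le T_2\iff a\le b$. Hence whichever of $a\le b$ or $b\le a$ occurs, the smaller of $T_1,T_2$ is exactly the one already known to be $\le T_0$, so $\min\{T_1,T_2\}\le T_0$ holds unconditionally. Combined with the Prentice bound this yields $\rho_{i-1}\le T_0$, i.e. $\alpha_i\le 1$. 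I expect this comparison of the minimum of the two Prentice terms against $T_0$ to be the crux of the whole argument; everything else is bookkeeping.

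For $\beta_i=(p_i-\alpha_i p_{i-1})/(1-\alpha_i)$ I would mimic the chain of equivalences used in the proof of Theorem \ref{exTHM}, now with $1-\alpha_i>0$ so that the denominator is positive. Clearing it and simplifying should give $\beta_i\ge 0\iff\alpha_i\le p_i/p_{i-1}\iff\rho_{i-1}\le T_2$ and $\beta_i\le 1\iff\alpha_i\le(1-p_i)/(1-p_{i-1})\iff\rho_{i-1}\le T_1$; both right-hand inequalities are precisely the two terms of the Prentice bound, so both hold by hypothesis. The one loose end is the degenerate boundary $\alpha_i=1$ (equivalently $\rho_{i-1}=T_0$), which by the case analysis can occur only when $p_{i-1}=p_i$ and $\rho_{i-1}=1$: there $\beta_i$ is a $0/0$ expression but is irrelevant, since $X_i=U_iX_{i-1}=X_{i-1}$ regardless of $Y_i$, so I would simply set $\beta_i$ to any value in $[0,1]$ and note the point. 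Collecting the four inequalities $0\le\alpha_i\le 1$ and $0\le\beta_i\le 1$ over all $i$ finishes the proof.
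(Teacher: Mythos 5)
Your proof is correct and follows essentially the same route as the paper: both arguments work index by index, use only the adjacent pairwise Prentice bound $\rho_{i-1}\le\min\{T_1,T_2\}$, and reduce to the same case split on whether $p_{i-1}\le p_i$ (your comparison of $T_1,T_2$ against $T_0$ is exactly the paper's two complementary cases, and your equivalences $\beta_i\ge 0\iff\rho_{i-1}\le T_2$ and $\beta_i\le 1\iff\rho_{i-1}\le T_1$ match the paper's computations). Your explicit treatment of the degenerate boundary $\alpha_i=1$ (which forces $p_{i-1}=p_i$ and $\rho_{i-1}=1$, where $\beta_i$ is $0/0$ but irrelevant since $X_i=X_{i-1}$) is a detail the paper silently skips, and is a welcome addition.
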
 
\begin{proof}
For each $i \in \{2,\cdots,m\}$, we show $0\leq \alpha_i,\beta_i\leq 1$ in  two complementary cases:

When $p_{i-1}\leq p_i$, according to the Prentice constraints,  we have
\begin{equation}
    0\leq \alpha_i=\rho_{i-1}\sqrt{\frac{p_i(1-p_i)}{p_{i-1}(1-p_{i-1})}}\leq
    \sqrt{\frac{p_i(1-p_i)}{p_{i-1}(1-p_{i-1})}}\sqrt{\frac{p_{i-1}(1-p_{i})}{p_{i}(1-p_{i-1})}}=\frac{1-p_i}{1-p_{i-1}}\leq1.
\end{equation}
Thus,
\begin{equation}
\begin{aligned}
(1-p_{i-1})\alpha_i\leq1-p_i\iff
p_i-\alpha_i p_{i-1}\leq1-\alpha_i\iff
 \beta_i=\frac{p_i-\alpha_i p_{i-1}}{1-\alpha_i}\leq1.
 \end{aligned}
\end{equation}

At the same time, 
\begin{equation}
    \beta_i=\frac{p_i-\alpha_i p_{i-1}}{1-\alpha_i}\geq\frac{p_i-\alpha_i p_i}{1-\alpha_i}=p_i\geq0.
\end{equation}
Then we have $0\leq\beta_i\leq1$.

In the other case, i.e., $p_{i-1}> p_i$, according to the Prentice constraints, 

\begin{equation}
    0\leq \alpha_i=\rho_{i-1}\sqrt{\frac{p_i(1-p_i)}{p_{i-1}(1-p_{i-1})}}\leq
    \sqrt{\frac{p_i(1-p_i)}{p_{i-1}(1-p_{i-1})}}\sqrt{\frac{p_{i}(1-p_{i-1})}{p_{i-1}(1-p_{i})}}=\frac{p_i}{p_{i-1}}\leq1.
\end{equation}
Then we have 
\begin{equation}
\begin{aligned}
 p_i\geq\alpha_i p_{i-1}\iff
       \beta_i=\frac{p_i-\alpha_i p_{i-1}}{1-\alpha_i}\geq 0.
    \end{aligned}
\end{equation}

Meanwhile,
\begin{equation}
    \beta_i=\frac{p_i-\alpha_i p_{i-1}}{1-\alpha_i}\leq\frac{p_i-\alpha_i p_{i}}{1-\alpha_i}=p_i\leq1.
\end{equation}
At this point, we have proved that, with the Prentice constraints satisfied, $0\leq \alpha_i,\beta_i\leq 1$, $(i = 2,\cdots,m)$ hold for any binary data with non-negative decaying-product correlation structure, indicating they can be generated by Algorithm \ref{alg2}. 
\end{proof}

\subsection{1-dependent correlation structure}\label{statm}
Under the stationary $K$-dependent structure, there is a band of stationary correlations, such that each of the correlation is truncated to zero  after the $K$-th order band \citep{hardin2002generalized}. The correlation coefficient is:
\begin{equation}
r_{ij}=\left\{
\begin{aligned}
&\rho_{\left|i-j\right|},& \quad &\text{if} \  i\neq j \ \text{and}\  \left|i-j\right|\leq K \\
&0,& \quad &\text{if}\  \left|i-j\right|> K\\
& 1 ,& \quad &\text{if}\  i = j\\
\end{aligned}
\right..
\end{equation}

As the most common case, the stationary  $1$-dependent correlation matrix can be expressed as:
\begin{equation}
    \label{1dep}
\bm{R}={
\left( \begin{array}{cccccc}
1 & \rho_{1} & 0 &\cdots&0&0\\
\rho_{1} & 1 & \rho_{1} &\cdots&0& 0\\
0 & \rho_{1} & 1&\cdots&0& 0\\
\vdots&\vdots&\vdots&\ddots&\vdots&\vdots\\
0&0&0&\cdots&1&\rho_{1}\\
0 & 0& 0 &\cdots&\rho_{1}&1
\end{array} 
\right )},
\end{equation}
where the elements on the minor diagonal are equal. To expand the applicability of our method, we design the algorithms on a more general 1-dependent case, allowing the elements on the minor diagonal to vary, i.e., $\bm{\rho_1}=(\rho_{11},\rho_{12},\cdots,\rho_{1,m-1})$. For simplicity, we use $\bm{\rho}$ to represent $\bm{\rho_1}$ in this section.

In the main context, we focus on the generation of binary data with  $1$-dependent correlation structure. We introduce two algorithms with different applicable conditions. Both algorithms allow marginal probabilities to vary. We will describe the details of applicable conditions separately after introducing the corresponding algorithms. 

Intuitively, we intend to construct the variables that have correlation between adjacent elements, but independent with those not adjacent, in order to satisfy the $1$-dependent correlation structure. Our Algorithm \ref{alg3} utilizes the form of $X_i=U_iY_iY_{i-1}$ to guarantee the independence of $X_j$ and $X_k$ when $|j-k|\geq 2$. In addition, the intermediate parameters are also important to make sure that the correlation coefficients, and the marginal probabilities are satisfied. We present the details in Algorithm \ref{alg3}.

\begin{algorithm}[H]
\caption{Generate binary data under the  $1$-dependent correlation structure.  (Method \uppercase\expandafter{\romannumeral1})} 
\label{alg3}
\hspace*{0.02in} {\bf Input:}
The expected values of the Bernoulli random variables $p_1,p_2,\cdots,p_m$; and the correlation coefficient vector $\bm{\rho}=(\rho_1,\rho_2,\cdots,\rho_{m-1})$ \\
\hspace*{0.02in} {\bf Output:}
The correlated binary variables $X_1,X_2,\cdots,X_m$
\begin{algorithmic}[1]
    \makeatletter
    \setcounter{ALG@line}{-1}
    \makeatother
    \State Check whether the input satisfies the Prentice constraints.
\State $\beta_0=1$
\State $Y_0 \sim Bern(1)$
\For{$i=1,2,\cdots,m-1$}
    \State $\beta_i=\frac{\sqrt{p_ip_{i+1}}}{\sqrt{p_ip_{i+1}}+\rho_i\sqrt{(1-p_i)(1-p_{i+1})}}$
    \State $\alpha_i=\frac{p_i}{\beta_i\beta_{i-1}}$
    \State \textbf{if} $\alpha_i \textgreater 1$: Prompt the unavailability; \textbf{break}
    \State
    Generate $U_i\sim Bern(\alpha_i)$
    \State
    Generate $Y_i\sim Bern(\beta_i)$
    \State
    $X_i = U_iY_iY_{i-1}$
\EndFor
\State $\alpha_m=\sqrt{\frac{p_m}{\beta_{m-1}}}$
\State $\beta_m=\sqrt{\frac{p_m}{\beta_{m-1}}}$
\State Generate $U_m\sim Bern(\alpha_m)$
\State Generate $Y_m\sim Bern(\beta_m)$
\State $X_m=U_mY_mY_{m-1}$
\State\Return $X_1,X_2,\cdots,X_m$

\end{algorithmic}
\end{algorithm}

\begin{thm}
\label{kdepTHM_cr}
If intermediate variables $\alpha_1$, $\cdots$, $\alpha_m$, $\beta_1$, $\cdots$, $\beta_m \in \left[0,1\right]$, Algorithm \ref{alg3}  returns the  binary data with given  marginal expectation and  1-dependent correlation structure.
\end{thm}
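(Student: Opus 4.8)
The plan is to verify separately that Algorithm~\ref{alg3} produces the correct marginal expectations and the correct $1$-dependent correlation pattern, paralleling the proofs of Theorems~\ref{exTHM_cr} and~\ref{arTHM_cr}. Throughout I will use that each $X_i = U_iY_iY_{i-1}$ is a product of independent Bernoulli variables, hence again $\{0,1\}$-valued, and that all of the $U_i$ and $Y_i$ are mutually independent. For the marginals, when $i\le m-1$ independence gives $EX_i = E(U_i)E(Y_i)E(Y_{i-1}) = \alpha_i\beta_i\beta_{i-1}$, and substituting $\alpha_i = p_i/(\beta_i\beta_{i-1})$ collapses this to $p_i$ (with $\beta_0=1$ covering $i=1$). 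The terminal index $i=m$ is handled on its own: there $EX_m = \alpha_m\beta_m\beta_{m-1}$, and the choice $\alpha_m=\beta_m=\sqrt{p_m/\beta_{m-1}}$ is made precisely so that $\alpha_m\beta_m = p_m/\beta_{m-1}$, again yielding $EX_m=p_m$.

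Next I would establish $1$-dependence, i.e.\ that $r_{ij}=0$ whenever $|i-j|\ge 2$. The key point is that $X_i$ is a function only of $U_i,Y_{i-1},Y_i$, so for $j\ge i+2$ the variables $X_i$ and $X_j$ depend on disjoint subsets of the independent pool (the $Y$-indices $\{i-1,i\}$ and $\{j-1,j\}$ do not meet once $j-1>i$), giving independence and thus $cov(X_i,X_j)=0$. For the adjacent pairs I would compute $cov(X_i,X_{i+1})$. Since $X_i$ and $X_{i+1}$ share the factor $Y_i$, and $Y_i$ is idempotent so that $E(Y_i^2)=\beta_i$, independence yields $E(X_iX_{i+1})=\alpha_i\alpha_{i+1}\beta_{i-1}\beta_i\beta_{i+1}$, which after substituting the $\alpha$'s simplifies to $p_ip_{i+1}/\beta_i$. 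Hence $cov(X_i,X_{i+1}) = p_ip_{i+1}(1-\beta_i)/\beta_i$, and using $(1-\beta_i)/\beta_i = \rho_i\sqrt{(1-p_i)(1-p_{i+1})}/\sqrt{p_ip_{i+1}}$ from the definition of $\beta_i$ gives $cov(X_i,X_{i+1}) = \rho_i\sqrt{p_ip_{i+1}(1-p_i)(1-p_{i+1})}$; dividing by $\sqrt{Var(X_i)Var(X_{i+1})}$ returns $r_{i,i+1}=\rho_i$.

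I expect the main obstacle to be the bookkeeping around the terminal index $m$. The pair $(m-1,m)$ uses the special values $\alpha_m=\beta_m=\sqrt{p_m/\beta_{m-1}}$ rather than the generic formula, so I must check that the identity $\alpha_m\beta_m=p_m/\beta_{m-1}$ still makes $E(X_{m-1}X_m)$ reduce to $p_{m-1}p_m/\beta_{m-1}$, exactly as in the generic case, so that $r_{m-1,m}=\rho_{m-1}$ carries over unchanged. The other delicate point is the repeated use of idempotence of the shared Bernoulli factor $Y_i$, which is what prevents a spurious $\beta_i^2$ from appearing in the cross-moment and is essential for the correlation to come out as $\rho_i$ rather than some rescaled value.
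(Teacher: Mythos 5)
Your proposal is correct and follows essentially the same route as the paper's proof: verify the marginals via $EX_i=\alpha_i\beta_i\beta_{i-1}$ (with the terminal case $\alpha_m\beta_m=p_m/\beta_{m-1}$), get $r_{ij}=0$ for $|i-j|\ge 2$ from disjointness of the underlying independent variables, and compute $cov(X_i,X_{i+1})=p_ip_{i+1}(1-\beta_i)/\beta_i$ using $E(Y_i^2)=\beta_i$ before invoking the identity $(1-\beta_i)/\beta_i=\rho_i\sqrt{(1-p_i)(1-p_{i+1})}/\sqrt{p_ip_{i+1}}$. Your explicit check of the boundary pair $(m-1,m)$ is a small extra care the paper folds silently into its general formula, but the argument is the same.
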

\begin{proof}
For each $i \in \{1,2,\cdots,m-1\}$, 
\begin{equation}
    EX_i=EU_iY_iY_{i-1}=\alpha_i\beta_i\beta_{i-1}=\frac{p_i}{\beta_i\beta_{i-1}}\beta_i\beta_{i-1}=p_i.
\end{equation}
Meanwhile, 
\begin{equation}
    EX_m=EU_mY_mY_{m-1}=\alpha_m\beta_m\beta_{m-1}=\sqrt{\frac{p_m}{\beta_{m-1}}}\sqrt{\frac{p_m}{\beta_{m-1}}}\beta_{m-1}=p_m.
\end{equation}
From the generation process, it is obvious that $r_{ij}=0$ when $ \left|j-i\right|>1$.
Then for $i \in \{1,2,\cdots, m-1\}$,
\begin{equation}
\begin{aligned}
    r_{i,i+1}=&\frac{cov(X_i,X_{i+1})}{\sqrt{Var(X_i)Var(X_{i+1})}}=\frac{cov(U_iY_iY_{i-1},U_{i+1}Y_{i+1}Y_i)}{\sqrt{p_ip_{i+1}(1-p_i)(1-p_{i+1})}}\\
    =&\frac{E(U_iU_{i+1}Y_{i-1}Y_i^2Y_{i+1})-E(U_iY_iY_{i-1})E(U_{i+1}Y_{i+1}Y_i)}{\sqrt{p_ip_{i+1}(1-p_i)(1-p_{i+1})}}\\
    =&\frac{(\alpha_{i}\beta_{i-1}\beta_i)(\alpha_{i+1}\beta_i\beta_{i+1})(1-\beta_i)}{\beta_i\sqrt{p_ip_{i+1}(1-p_i)(1-p_{i+1})}}
    \\=&\frac{(1-\beta_i)\sqrt{p_ip_{i+1}}}{\beta_i\sqrt{(1-p_i)(1-p_{i+1})}}=\rho_i.
\end{aligned}
\end{equation}
\end{proof}

Similar to the previous algorithms, the data generated by Algorithm \ref{alg3} require intermediate parameters $\alpha_i$ and $\beta_i$ to be in the range of $[0,1]$. Otherwise, intermediate variables $U_i$ and $Y_i$ cannot be generated, nor can $X_i$. In the following theorem, we show the restriction for $\pmb{\rho}$ such that the intermediate parameter requirement is satisfied.



\begin{thm}
\label{alg3_property}
Only binary data with non-negative 1-dependent structure satisfying the following inequalities can be generated from Algorithm \ref{alg3}:
\begin{equation}
    A_i\rho_{i-1}\rho_i+B_{1i}\rho_{i-1}+B_{2i}\rho_i\leq C_i,\quad i=2,\cdots,m-1,
    \label{alg_inequal}
\end{equation}
    where $A_i=\sqrt{(1-p_{i-1})(1-p_i)(1-p_{i+1})}$, 
$B_{1i}=\sqrt{(1-p_{i-1})p_ip_{i+1}}$, $B_{2i}=\sqrt{p_{i-1}p_i(1-p_{i+1})}$, and
$C_i=\sqrt{p_{i-1}(1-p_i)p_{i+1}}$. When $\rho_1=\rho_2=\dots=\rho_m=\rho$, we denote $A=A_i$, $B=B_{1i}+B_{2i}$ and $C=C_i$, then the inequality \eqref{alg_inequal} can be simplified to 
\begin{equation}
    0\leq\rho_i\leq \frac{-B+\sqrt{B^2+4AC}}{2A}.
\end{equation}
Further, in the special case when $p_1=p_2=\dots=p_m=p$, the applicable condition is
$0\leq\rho\leq\frac{\sqrt{p}}{1+\sqrt{p}}$.
\end{thm}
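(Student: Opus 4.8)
The plan is to recognize that Algorithm \ref{alg3} succeeds in producing $X_1,\dots,X_m$ precisely when every intermediate parameter it generates lies in $[0,1]$, and then to identify which of these membership conditions are actually binding. First I would observe that for each $i\in\{1,\dots,m-1\}$ the quantity $\beta_i=\frac{\sqrt{p_ip_{i+1}}}{\sqrt{p_ip_{i+1}}+\rho_i\sqrt{(1-p_i)(1-p_{i+1})}}$ has a positive numerator bounded above by its denominator whenever $\rho_i\geq 0$, so $\beta_i\in(0,1]$ automatically; likewise $\alpha_i=\frac{p_i}{\beta_i\beta_{i-1}}>0$ is automatic. Thus the only conditions that can fail are the upper bounds $\alpha_i\leq 1$ (and $\beta_m\le1$ in the terminal block).

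Next I would dispatch the two endpoint indices. For $i=1$, since $\beta_0=1$, the condition $\alpha_1=\frac{p_1}{\beta_1}\leq 1$ rearranges to $\rho_1\leq\sqrt{\frac{p_2(1-p_1)}{p_1(1-p_2)}}$, which is implied by the Prentice bound on $\rho_1$; for the terminal block, $\alpha_m=\beta_m=\sqrt{p_m/\beta_{m-1}}\leq 1$ rearranges to $\rho_{m-1}\leq\sqrt{\frac{p_{m-1}(1-p_m)}{p_m(1-p_{m-1})}}$, again implied by the Prentice bound. Hence both endpoint conditions are guaranteed by the Prentice check performed in the first line of the algorithm, so the genuinely new restrictions are exactly $\alpha_i\leq 1$ for $i=2,\dots,m-1$.

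The heart of the argument is then to rewrite $\alpha_i\leq 1$, i.e.\ $p_i\leq\beta_i\beta_{i-1}$, in the stated form. Writing $\frac{1}{\beta_i}=1+\rho_i\sqrt{\frac{(1-p_i)(1-p_{i+1})}{p_ip_{i+1}}}$ and the analogous expression for $\frac{1}{\beta_{i-1}}$, I would expand $p_i\,\frac{1}{\beta_i}\,\frac{1}{\beta_{i-1}}\leq 1$ into four terms (a constant, a $\rho_{i-1}$ term, a $\rho_i$ term, and a $\rho_{i-1}\rho_i$ term), move the constant $p_i$ to the right as $1-p_i$, and then multiply both sides by the single factor $\sqrt{\frac{p_{i-1}p_{i+1}}{1-p_i}}$. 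A direct check shows that this clearing factor simultaneously collapses all the nested radicals, so that the three remaining coefficients become exactly $A_i,B_{1i},B_{2i}$ and the right-hand side becomes $C_i$, yielding \eqref{alg_inequal}.

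Finally I would treat the two special cases by specialization. Setting $\rho_{i-1}=\rho_i=\rho$ turns \eqref{alg_inequal} into the quadratic $A\rho^2+B\rho-C\leq 0$ with $A,C>0$ and $B\geq 0$; since $\sqrt{B^2+4AC}>B$, its larger root is positive and the admissible set is the interval $\bigl[0,\frac{-B+\sqrt{B^2+4AC}}{2A}\bigr]$. For $p_1=\dots=p_m=p$ I would substitute $A=(1-p)^{3/2}$, $B=2p\sqrt{1-p}$, $C=p\sqrt{1-p}$ and use $B^2+4AC=4p(1-p)$ to reduce the bound to $\frac{\sqrt p}{1+\sqrt p}$. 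The main obstacle is the radical bookkeeping in the third paragraph: the equivalence becomes transparent only once one hits on the clearing factor $\sqrt{p_{i-1}p_{i+1}/(1-p_i)}$, after which each term simplifies to a single square root; the endpoint reductions and the two specializations are routine once this core identity is established.
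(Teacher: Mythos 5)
Your proposal is correct and follows essentially the same route as the paper: observe that $\beta_i\in(0,1]$ and $\alpha_i\geq 0$ are automatic for $\rho_i\geq 0$, reduce the problem to $\alpha_i=p_i/(\beta_i\beta_{i-1})\leq 1$, expand the product and clear denominators to obtain \eqref{alg_inequal}, then specialize to the equal-$\rho$ quadratic and the equal-$p$ bound $\sqrt{p}/(1+\sqrt{p})$. Your explicit treatment of the endpoint indices $i=1$ and $i=m$ (showing those constraints are already implied by the Prentice check) is a small but welcome addition that the paper's proof leaves implicit.
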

\begin{proof}
Since $\rho_i\geq0$, it is obvious that $\alpha_i\geq 0$ and $0\leq \beta_i\leq 1$ based on the definitions of $\alpha_i$ and $\beta_i$. Now we study the applicable condition for $\rho_i$ such that the intermediate parameter $\alpha_i\leq 1$. 
For each $i\in \{1,2,\cdots,m-1\}$, we have
\begin{equation}
\begin{aligned}
&\alpha_i=\frac{p_i}{\beta_i\beta_{i-1}}=\frac{(\sqrt{p_ip_{i+1}}+\rho_{i}\sqrt{(1-p_i)(1-p_{i+1})})(\sqrt{p_{i-1}p_{i}}+\rho_{i-1}\sqrt{(1-p_{i-1})(1-p_{i})})}{\sqrt{p_{i-1}p_{i+1}}}\leq 1,
\end{aligned}
\end{equation}
which is equivalent to inequality \eqref{alg_inequal}.

When $\rho_1=\rho_2=\dots=\rho_m=\rho$, denote $A$, $B$ and $C$ as above,
and we get
\begin{equation}
\begin{aligned}
\alpha_i\leq1 \iff
A\rho^2+B\rho-C\leq0
\Rightarrow
\rho\leq\frac{-B+\sqrt{B^2+4AC}}{2A}.
\end{aligned}
\end{equation}
Thus, given the marginal probabilities of a  variable as well as its neighbors', we can derive the constraint for the corresponding correlation in the Algorithm \ref{alg3}. The constraint is not relevant with other correlation coefficients. 

In the special case when $p_1=p_2=\dots=p_m=p$, the above constraint can be reduced to $\rho\leq\frac{\sqrt{p}}{1+\sqrt{p}}$, which is not related with the dimension $m$ and only related with the marginal probability $p$. Figure \ref{fig:mdep1} shows the relationship between the maximal allowed correlation $\rho_{max}$ and the marginal probability $p$. The maximal allowed correlation is monotonically increase with $p$ and converge to 0.5, which is consistent with the natural restrictions for positive definiteness discussed in Section 2.1.
\end{proof}

\begin{figure}[t!]
\centering
\includegraphics[width=0.8\textwidth]{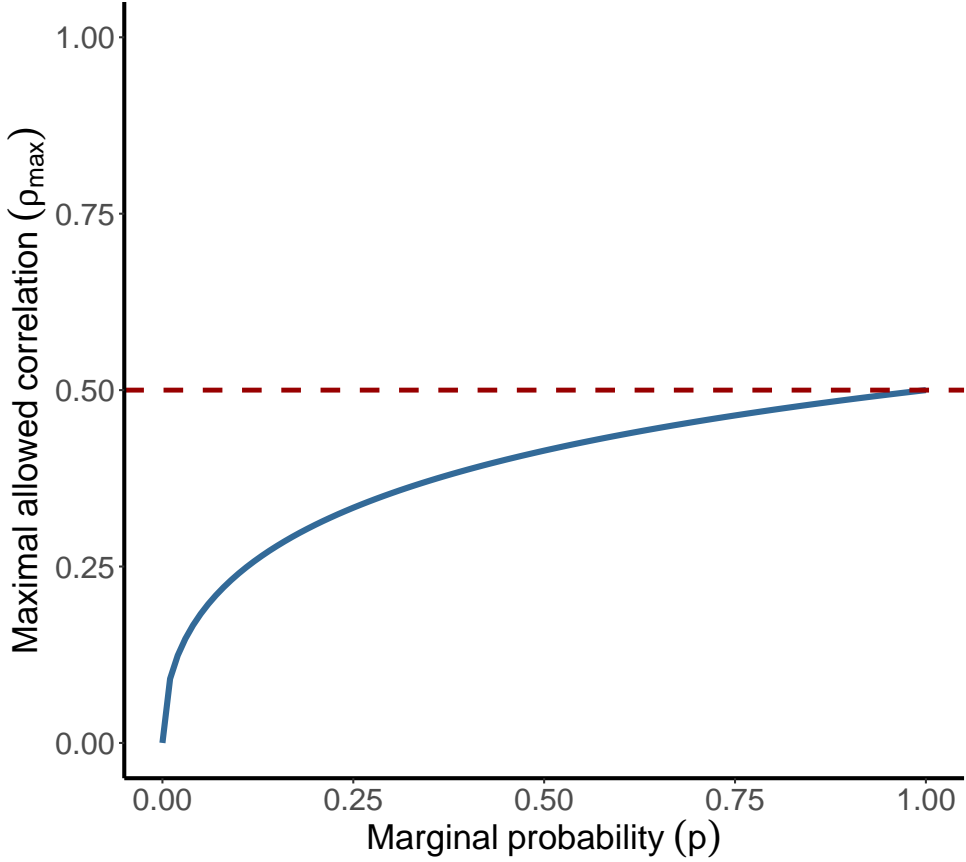}
\caption{Relationship between the maximal allowed correlation ($\rho_{max}$) and given marginal probability  ($p$) in Algorithm \ref{alg3}, when $p_1=p_2=\cdots=p_m=p$ and $\rho_1=\rho_2=\cdots=\rho_m=\rho$. The maximal correlation will monotonically increase as $p$ increases, and converge to 0.5.}
\label{fig:mdep1}
\end{figure}

For Algorithm \ref{alg4}, we used the form $W_i=(1-U_i)Y_i+U_iY_{i-1}$ to generate variables with 1-dependent correlation structure, and then multiply $W_i$ with an $A_i$ to adjust the probability. Actually, the idea of multiplying an $A_i$ has been discussed in  Lunn and Davies' paper. As they discussed in the paper, this strategy can only be applied in $1$-dependent correlation structure. Nevertheless, we provide the details of this strategy for generating binary data with  $1$-dependent correlation structure in Algorithm \ref{alg4}. In the following, we provide the proof of related properties for Algorithm 4. 

\begin{algorithm}[!h]
\caption{Generate binary data under the  1-dependent correlation structure. (Method \uppercase\expandafter{\romannumeral2})} 
\label{alg4}
\hspace*{0.02in} {\bf Input:}
The expected values of the Bernoulli random variables $p_1,p_2,\cdots,p_m$; and the correlation coefficient vector $\bm{\rho}=(\rho_1,\rho_2,\cdots,\rho_{m-1})$ \\
\hspace*{0.02in} {\bf Output:}
The correlated binary variables $X_1,X_2,\cdots,X_m$
\begin{algorithmic}[1]
    \makeatletter
\setcounter{ALG@line}{-1}
\makeatother
\State Check whether the input satisfies the Prentice constraints.
\State $p_{max}=\max\{p_1,\cdots,p_m\}$
\State Denote $\alpha_i=\frac{p_i}{p_{max}}$, $ i=1,\cdots,m$
\State Denote $\rho_i'=\rho_i\frac{\sqrt{(1-p_i)(1-p_{i+1})}}{\sqrt{\alpha_i\alpha_{i+1}}(1-p_{max})}$, $i=1,\cdots,m-1$
\State $Y_1\sim Bern(p_{max})$
\State $W_1=Y_1$
\State $r_1=0$
\For{$i=2,\cdots,m$}
\State $r_i=\frac{\rho_{i-1}'}{1-r_{i-1}}$
\State \textbf{if} $r_i \textgreater 1$: Prompt the unavailability; \textbf{break}
\State Generate $U_i\sim Bern(r_i)$
\State Generate $Y_i\sim Bern(p_{max})$
\State $W_i=(1-U_i)Y_i+U_iY_{i-1}$
\EndFor
\For{$i=1,\cdots,m$}
\State Generate $A_i\sim Bern(\alpha_i)$
\State$X_i=A_iW_i$

\EndFor
\State\Return $X_1,X_2,\cdots,X_m$
\end{algorithmic}
\end{algorithm}

\begin{thm}
    If intermediate variables $\alpha_2$, $\cdots$, $\alpha_m$, $r_2$, $\cdots$, $r_m \in \left[0,1\right]$, Algorithm \ref{alg4} returns the binary data with given marginal expectation and   1-dependent correlation structure.
\end{thm}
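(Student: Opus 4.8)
The plan is to exploit the two-stage structure of Algorithm~\ref{alg4}: first the variables $W_1,\dots,W_m$ are built to have a common marginal mean $p_{max}$ and a $1$-dependent correlation structure, and then each is independently ``thinned'' by $A_i\sim Bern(\alpha_i)$ to form $X_i=A_iW_i$, which lowers the marginal to $p_i$ while preserving $1$-dependence. Since $A_i,W_i\in\{0,1\}$, each $X_i$ is binary. Throughout I will use that the $A_i$ are mutually independent and independent of every $U$ and $Y$, and that all $Y_i\sim Bern(p_{max})$.

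First I would settle the marginals. By induction on $i$, using $W_1=Y_1$ and $E W_i=(1-r_i)E Y_i+r_i E Y_{i-1}=(1-r_i)p_{max}+r_ip_{max}=p_{max}$ (valid because $U_i$ is independent of $Y_{i-1}$), every $W_i$ has mean $p_{max}$. Hence $E X_i=E A_i\,E W_i=\alpha_i p_{max}=p_i$, and since $X_i$ is binary $\operatorname{Var}(X_i)=p_i(1-p_i)$.

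For the correlations, the key reduction is that for $i\neq j$ the independence of the $A$'s gives $\operatorname{Cov}(X_i,X_j)=\alpha_i\alpha_j\operatorname{Cov}(W_i,W_j)$, so it suffices to analyze the $W$'s. When $|i-j|\geq 2$ the variables $W_i$ and $W_j$ are functions of disjoint collections of the independent random variables $\{U_\bullet,Y_\bullet\}$, hence independent, giving $\operatorname{Cov}(X_i,X_j)=0$. For adjacent indices I would expand $E[W_iW_{i+1}]$ with $W_i=(1-U_i)Y_i+U_iY_{i-1}$ and $W_{i+1}=(1-U_{i+1})Y_{i+1}+U_{i+1}Y_i$ into four terms; the only source of correlation is the shared $Y_i$, and after collecting terms this yields $\operatorname{Cov}(W_i,W_{i+1})=(1-r_i)\,r_{i+1}\,p_{max}(1-p_{max})$.

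The crux is then the recursion $r_{i+1}=\rho_i'/(1-r_i)$, engineered precisely so that $(1-r_i)r_{i+1}=\rho_i'$, collapsing the covariance to $\operatorname{Cov}(W_i,W_{i+1})=\rho_i'\,p_{max}(1-p_{max})$. Substituting the definition $\rho_i'=\rho_i\sqrt{(1-p_i)(1-p_{i+1})}/(\sqrt{\alpha_i\alpha_{i+1}}(1-p_{max}))$ together with $\sqrt{\alpha_i\alpha_{i+1}}=\sqrt{p_ip_{i+1}}/p_{max}$ into $\operatorname{Cov}(X_i,X_{i+1})=\alpha_i\alpha_{i+1}\operatorname{Cov}(W_i,W_{i+1})$ and dividing by $\sqrt{\operatorname{Var}(X_i)\operatorname{Var}(X_{i+1})}=\sqrt{p_ip_{i+1}(1-p_i)(1-p_{i+1})}$ makes all the probability factors cancel and leaves exactly $\rho_i$. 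I expect the main obstacle to be purely bookkeeping: correctly expanding $E[W_iW_{i+1}]$ while tracking which factors are independent, and then carrying out the $\rho_i'$ substitution without arithmetic slips, since it is the telescoping design of $r_i$ that makes the final cancellation clean.
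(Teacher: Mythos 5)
Your proposal is correct and follows essentially the same route as the paper: compute $EW_i=p_{max}$ and hence $EX_i=\alpha_i p_{max}=p_i$, note that $W_i$ and $W_j$ share no randomness when $|i-j|\geq 2$, and for adjacent indices reduce the covariance to the shared $Y_i$ term, obtaining $(1-r_i)r_{i+1}p_{max}(1-p_{max})=\rho_i'\,p_{max}(1-p_{max})$ before substituting the definition of $\rho_i'$ to recover $\rho_i$. The paper's proof is just a more condensed version of the same calculation, so no further comparison is needed.
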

\begin{proof}
From the definition, $EX_1=p_1$.
For $i$ in $\{2,\cdots,m\}$, we have
\begin{equation}
    EX_i=EA_iW_i=\alpha_i((1-r_i)p_{max}+r_ip_{max})
    =\frac{p_i}{p_{max}}p_{max}=p_i.
\end{equation}

Then for each $i \in \{1,2,\cdots,m-1\}$, we can obtain that
\begin{equation}
\begin{aligned}
    r_{i,i+1}=&\frac{cov(X_i,X_{i+1})}{\sqrt{Var(X_i)Var(X_{i+1})}}=
    \frac{\alpha_i \alpha_{i+1}cov((1-U_i)Y_i,U_{i+1}Y_{i})}{\sqrt{p_ip_{i+1}(1-p_i)(1-p_{i+1})}}\\
    =&\frac{\alpha_i \alpha_{i+1}(E((1-U_i)U_{i+1}Y_i^2)-E((1-U_i)Y_i)E(U_{i+1}Y_i))}{\sqrt{p_ip_{i+1}(1-p_i)(1-p_{i+1})}}\\
    =&\frac{\rho_i'(1-p_{max})\sqrt{p_ip_{i+1}}}{p_{max}\sqrt{(1-p_i)(1-p_{i+1})}}
    =\frac{\rho_i\sqrt{p_ip_{i+1}}}{p_{max}\sqrt{\alpha_i\alpha_{i+1}}}
    =\frac{\rho_i\sqrt{p_ip_{i+1}}}{p_{max}\sqrt{\frac{p_ip_{i+1}}{p_{max}^2}}}=\rho_i.
\end{aligned}
\end{equation}

Meanwhile, it is obvious that $r_{i,j}=0$, $|i-j|\geq 2$. We obtain the correctness of the algorithm.
\end{proof}

The data generated by Algorithm \ref{alg4} need the intermediate parameters $\gamma_i$ to locate in $[0,1]$. In the following theorem, we show the restriction for $\pmb{\rho}$ such that the intermediate parameter requirement is satisfied.

\begin{thm}
\label{thm2.8}
Only binary data with non-negative  1-dependent correlation structure satisfying the following inequalities can be generated from Algorithm \ref{alg4}:
\begin{equation}
    0\leq\rho_i \leq \sqrt{\frac{p_ip_{i+1}}{(1-p_i)(1-p_{i+1})}}\frac{1-p_{max}}{p_{max}}(1-r_{i}),
\end{equation}
where $r_i=\frac{\rho_{i-1}}{1-r_{i-1}}\sqrt{\frac{(1-p_{i-1})(1-p_i)}{p_{i-1}p_i}}\frac{p_{max}}{1-p_{max}}$
is a function of previous correlation $\rho_{i-1}$. In the special case when $p_1=p_2=\dots=p_m=p$ and $\rho_1=\rho_2=\dots=\rho_m=\rho\geq0$, the applicable condition is $\frac{1}{\sqrt{1-4\rho}}\left[(1+\sqrt{1-4\rho})^{m+1}-(1-\sqrt{1-4\rho})^{m+1}\right]\geq 0$.

\end{thm}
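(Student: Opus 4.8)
The plan is to isolate which intermediate quantities can actually leave $[0,1]$. Algorithm~\ref{alg4} sets $\alpha_i=p_i/p_{max}$, which lies in $[0,1]$ automatically, so the $A_i$ are always generable and the only parameters that can fail are the $r_i$. Since every $\rho_i\ge 0$ and $r_1=0$, a short induction shows $r_i\ge 0$ as long as the recursion stays defined (i.e. $1-r_{i-1}>0$), so the sole binding requirement is $r_i\le 1$. First I would substitute $\alpha_i=p_i/p_{max}$ into the definition of $\rho_i'$, using $\sqrt{\alpha_i\alpha_{i+1}}=\sqrt{p_ip_{i+1}}/p_{max}$, to rewrite $\rho_i'=\rho_i\,p_{max}\sqrt{(1-p_i)(1-p_{i+1})}/(\sqrt{p_ip_{i+1}}\,(1-p_{max}))$; feeding this into $r_i=\rho_{i-1}'/(1-r_{i-1})$ reproduces exactly the closed form for $r_i$ quoted in the statement, confirming the two definitions of $r_i$ agree.

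For the general inequality I would simply impose the binding constraint $r_{i+1}\le 1$, equivalently $\rho_i'\le 1-r_i$, and solve the resulting linear inequality for $\rho_i$. After the substitution above this is a one-line algebraic rearrangement giving $0\le\rho_i\le\sqrt{p_ip_{i+1}/[(1-p_i)(1-p_{i+1})]}\,\tfrac{1-p_{max}}{p_{max}}\,(1-r_i)$, with $r_i$ the stated function of $\rho_{i-1}$ and $r_{i-1}$. This part is routine once the substitution is in place, and the forward nature of the recursion ($r_i$ depends only on $\rho_{i-1},r_{i-1}$) makes the bound on $\rho_i$ well-posed.

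The substantive content is the fully symmetric case $p_i\equiv p$, $\rho_i\equiv\rho$. Here $p_{max}=p$, $\alpha_i\equiv 1$, and $\rho_i'$ collapses to $\rho$, so the recursion becomes the scalar continued fraction $r_i=\rho/(1-r_{i-1})$ with $r_1=0$, and feasibility is ``$r_i\le 1$ for $i=2,\dots,m$''. The key idea is to linearize this M\"obius recursion by setting $1-r_i=b_i/b_{i-1}$, which converts it into the linear second-order recursion $b_i=b_{i-1}-\rho b_{i-2}$ with $b_0=b_1=1$. Solving via the characteristic equation $\lambda^2-\lambda+\rho=0$, whose roots are $\lambda_\pm=(1\pm\sqrt{1-4\rho})/2$, and fitting the initial data gives
\[
b_i=\frac{\lambda_+^{\,i+1}-\lambda_-^{\,i+1}}{\sqrt{1-4\rho}}=\frac{1}{2^{i+1}\sqrt{1-4\rho}}\Big[(1+\sqrt{1-4\rho})^{i+1}-(1-\sqrt{1-4\rho})^{i+1}\Big].
\]
Because $1-r_i=b_i/b_{i-1}$ telescopes to $b_i=\prod_{j=2}^{i}(1-r_j)$ with $b_0=b_1=1>0$, the condition reduces to the positivity of the $b_i$, whose binding case is $b_m\ge 0$; dropping the positive factor $2^{m+1}$ yields precisely the displayed condition.

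I expect the main obstacle to be the final sign argument rather than the algebra. For $\rho\le \tfrac14$ the roots are real and nonnegative, so $b_i=\sum_{k=0}^{i}\lambda_+^{\,k}\lambda_-^{\,i-k}>0$ automatically and the algorithm never breaks, making the condition vacuous there; the statement only bites for $\rho>\tfrac14$, where the roots are complex and $b_i=\rho^{\,i/2}\sin((i+1)\theta)/\sin\theta$ with $\cos\theta=1/(2\sqrt{\rho})$, so $b_i$ oscillates in sign. The delicate point is to argue that the feasibility region containing $\rho=0$ is exactly the first positive lobe $(m+1)\theta\le\pi$, on which ``$b_m\ge 0$'' coincides with ``$b_i\ge 0$ for all $i\le m$'' and hence with the algorithm's ``break on first violation'' logic; as $\rho$ increases, $b_m$ first vanishes when $(m+1)\theta=\pi$, and this zero crossing is the operative boundary recorded by the inequality. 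Verifying that this is the relevant crossing, and that the telescoped sign bookkeeping matches the step-by-step feasibility check, is where care is needed; everything else is mechanical.
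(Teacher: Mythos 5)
Your proposal is correct and follows essentially the same route as the paper: observe that $\alpha_i\in[0,1]$ automatically, reduce feasibility to $r_i\le 1$, rearrange to get the general bound on $\rho_i$, and in the symmetric case analyze the continued-fraction recursion $r_i=\rho/(1-r_{i-1})$ whose closed form produces the displayed expression with exponent $m+1$. Where you add value is in the details the paper omits: the paper simply states the general term formula for $r_i$ and asserts that ``the restriction becomes more and more strict as $i$ increases,'' whereas you derive the formula by the substitution $1-r_i=b_i/b_{i-1}$ (giving $b_i=b_{i-1}-\rho b_{i-2}$) and you correctly flag that the monotonicity claim needs an argument. Indeed your caveat is warranted: for $\rho>\tfrac14$ one has $b_i=\rho^{i/2}\sin((i+1)\theta)/\sin\theta$, and the single inequality $b_m\ge 0$ is only \emph{necessary} — e.g.\ $(m+1)\theta$ can land in a later positive lobe of the sine while some earlier $b_i<0$ — so the exact feasibility condition is $b_i\ge 0$ for all $i\le m$, equivalently $(m+1)\theta\le\pi$ on the component containing $\rho=0$, exactly as you describe. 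Your treatment is therefore a strict refinement of the paper's proof rather than a different one.
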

\begin{proof}
In Algorithm \ref{alg4}, $A_i$ and $Y_i$ can always be generated since their corresponding probabilities $\alpha_i$ and $p_{max}$ are in the range of $[0,1]$. Hence, we only need to consider the generation of $U_i$.

Based on the definition of $U_i$'s marginal probability $r_i$, it is obvious that $r_i\geq 0$ if $r_{i-1} \leq 1$ holds. Therefore, the only requirement of Algorithm \ref{alg4} is $r_i\leq 1$ ($i=1,\dots,m$). Thus, we have the applicable condition:
\begin{equation}
    r_i= \frac{\rho_{i-1}}{1-r_{i-1}}\sqrt{\frac{(1-p_{i-1})(1-p_i)}{p_{i-1}p_i}}\frac{p_{max}}{1-p_{max}} \leq1 \iff \rho_i \leq \sqrt{\frac{p_ip_{i+1}}{(1-p_i)(1-p_{i+1})}}\frac{1-p_{max}}{p_{max}}(1-r_{i}).
\end{equation}
In practice, we can obtain the condition for each $\rho_i$ by iteration.  

In the special case when $p_1=\cdots=p_m=p$, $\rho_1=\cdots=\rho_m=\rho$,
the requirement becomes the following series of inequalities:
\begin{equation}\label{rho4}
\begin{aligned}
&r_1=0\leq1\\
&r_2=\frac{\rho}{1-r_1}=\rho\leq1\\
&r_3=\frac{\rho}{1-r_2}=\frac{\rho}{1-\rho}\leq1
\Rightarrow \rho\leq\frac{1}{2}\\
&r_4=\frac{\rho}{1-r_3}=\frac{\rho}{1-\frac{\rho}{1-\rho}}=\frac{\rho(1-\rho)}{1-2\rho}\leq1
\Rightarrow \rho\leq \frac{3-\sqrt{5}}{2}\\
&\cdots
\end{aligned}
\end{equation}
The general term formula of $r_i$ is
$r_i=2\rho\left[\frac{(1+\sqrt{1-4\rho})^{i-1}-(1-\sqrt{1-4\rho})^{i-1}}{(1+\sqrt{1-4\rho})^i-(1-\sqrt{1-4\rho})^i}\right]$.
The series of inequalities can be reduced to 
\begin{equation}
    \frac{1}{\sqrt{1-4\rho}}\left[(1+\sqrt{1-4\rho})^{i+1}-(1-\sqrt{1-4\rho})^{i+1}\right]\geq 0,                                      \end{equation}
where $i=1,\dots, m$.

From the inequalities above  we see that as $i$ increases, the restriction of $\rho$ becomes more and more strict. Hence, we only need to satisfy the last   inequality in Algorithm \ref{alg4}, i.e.,
\begin{equation}
    \frac{1}{\sqrt{1-4\rho}}\left[(1+\sqrt{1-4\rho})^{m+1}-(1-\sqrt{1-4\rho})^{m+1}\right]\geq 0.
\end{equation}
\end{proof}

Compared to the applicable condition of Algorithm \ref{alg3}, the restriction of Algorithm  \ref{alg4} is only related with the dimension $m$ and not  the marginal probability $p$. 
Figure \ref{fig:mdep0} describes the relationship between the maximal allowed correlation $\rho_{max}$ and the dimension $m$. The $\rho_{max}$  monotonically decreases as $m$ increases, and converges to $0.25$. Therefore, if $\rho\leq 0.25$, this algorithm is suitable to generate binary data with an arbitrary dimension  $m$.

\begin{figure}[h]
\centering
\includegraphics[width=0.8\textwidth]{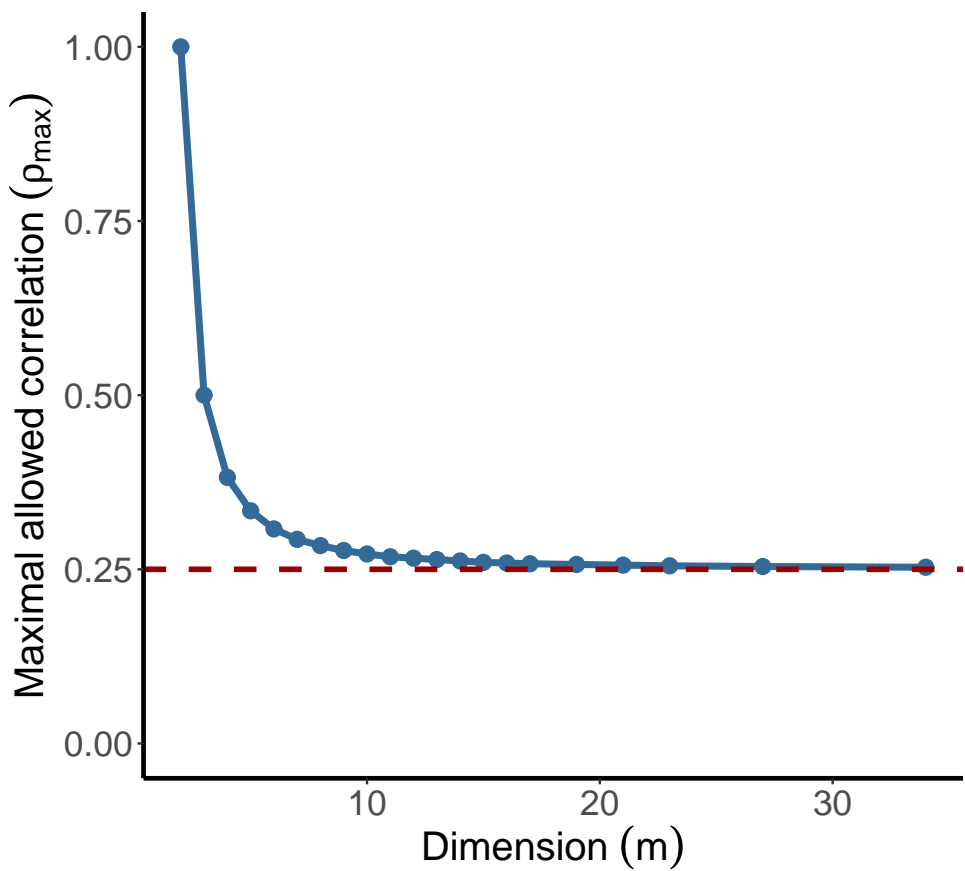}
\caption{Relationship between the maximal allowed correlation $\rho_{max}$ and the dimension $m$ in Algorithm \ref{alg4}, when $p_1=p_2=\dots=p_m=p$ and $\rho_1=\rho_2=\dots=\rho_m=\rho$. As $m$ increases, the maximum of $\rho$ will decrease, and gradually convergent to $0.25$.}
\label{fig:mdep0}
\end{figure}

To conclude, both algorithms have their own specialities and applicable conditions for generating binary data with  $1$-dependent correlation structure. In Algorithm \ref{alg3}, the limitation for each entry of correlation vector is only related with nearby marginal probabilities. In Algorithm \ref{alg4}, the limitation is also related with previous entries of correlation vector. It is difficult to distinguish which algorithm has more general applicable conditions. We give a detailed analysis when $p_1=p_2=\dots=p_m=p$ and $\rho_1=\rho_2=\dots=\rho_m=\rho$.
In this situation, Algorithm \ref{alg3} has no limitation on the dimension, meaning that the dimension can be arbitrarily large with feasible marginal probabilities. Thus, this algorithm is perfect for the situation when  high dimension is required. On the other hand, Algorithm \ref{alg4} is more flexible when the dimension is not too high, since it is not restricted by the marginal probabilities.

In order to incorporate  the two methods, we provide a function  {cBern1dep} in our R package  {CorBin}, which can automatically choose the suitable algorithm based on the given $\bm{p}$ and $\bm{\rho}$. In our function, we will first derive $r_i$ ($i=1,2,\cdots,m$) in Algorithm \ref{alg4}. 
If all the $r_i$ lie in the interval $\left[0,1\right]$, we will use Algorithm \ref{alg4} to generate the binary data. If not, the function will automatically call function  {rhoMax1dep} to calculate the largest  $\bm{\rho}$ allowed in Algorithm \ref{alg3}. If the given $\bm{\rho}$ lies in the interval, the binary data will be generated using Algorithm \ref{alg3}.

\subsection{$K$-dependent correlation structure and general correlation matrices}
\label{general}
In Section \ref{statm} we provide two algorithms to generate binary data with   1-dependent correlation structure.
Here we discuss the generation of the binary data with   $K$-dependent ($K> 1$) correlation structure by extending Algorithm \ref{alg3}. Specifically, if we set $K=m-1$, we can obtain binary data with  the general non-negative correlation matrices. Based on the intuition of Algorithm \ref{alg3}, we provide the details of the binary data generation algorithm under the  $K$-dependent correlation structure (and also a general correlation matrix) in Algorithm \ref{alg5}. 

We first denote $\bm{Y}=(Y_1,\cdots,Y_m)$ as a $K \times m$ matrix:
\begin{equation}
\bm{Y}={
\left( \begin{array}{cccc}
Y_{11} & Y_{12} & \cdots&Y_{1m}\\
Y_{21} & Y_{22}  &\cdots&Y_{2m}\\
\vdots&\vdots&\ddots&\vdots\\
Y_{K1}&Y_{K2}&\cdots&Y_{Km}
\end{array} 
\right )}.
\end{equation}
In the algorithm, we use $\bm{\rho_i}$ to denote the $m-i$ elements on the $i$-th diagonal of the correlation matrix, i.e.,
\begin{equation}
    \begin{aligned}
        \bm{\rho_i}=(\rho_{i1},\rho_{i2},\cdots,\rho_{i(m-i)})=(r_{1(i+1)},r_{2(i+2)},\cdots,r_{(m-i)m}).
    \end{aligned}
\end{equation}

\begin{algorithm}[!h]
\caption{Generate binary data under  $K$-dependent correlation structure  (and general correlation matrix if $K=m-1$).} 
\label{alg5}
\hspace*{0.02in} {\bf Input:}
The expected values of the Bernoulli random variables $p_1,p_2,\cdots,p_m$; and the correlation coefficient vector $\bm{\rho_1}, \bm{\rho_2},\cdots,\bm{\rho_K}$ \\
\hspace*{0.02in} {\bf Output:}
The correlated binary variables $X_1,X_2,\cdots,X_m$
\begin{algorithmic}[1]
    \makeatletter
\setcounter{ALG@line}{-1}
\makeatother
\State Check whether the input satisfies the Prentice constraints.
\State Let $p_{m+1}=p_{m+2}=\cdots=p_{m+K}=p_m$
\For{$i=1,2,\cdots,K$}
\State Let $\rho_{i,(m-(i-1))}=\rho_{i,(m-(i-1)+1)}=\cdots=\rho_{i,m}=0$
\For{$j=1,2,\cdots,m$}
\State 
$\beta_{ij}=\frac{p_jp_{i+j}}{p_jp_{i+j}+\rho_{ij}\sqrt{p_jp_{i+j}(1-p_j)(1-p_{i+j})}}$
\State
    Generate $Y_{ij}\sim Bern(\beta_{ij})$
\EndFor
\EndFor
\State 
$\alpha_1=\frac{p_1}{\prod_{l=1}^{K}\beta_{li}}$
\State \textbf{if} $\alpha_1 \textgreater 1$: Prompt the unavailability.
\State Generate $U_1\sim Bern(\alpha_1)$
\State $X_1=U_1\prod_{l=1}^{K}Y_{li}$
\For{$i=2,\cdots,m$}
\State Denote $K_i'=\min \{i-1,K\} $
\State
$\alpha_i=\frac{p_i}{\prod_{l=1}^{K}\beta_{li}\prod_{l=1}^{K_i'}\beta_{l(i-l)}}$
\State \textbf{if} $\alpha_i \textgreater 1$: Prompt the unavailability; \textbf{break}
\State
Generate $U_i\sim Bern(\alpha_i)$
\State
$X_i=U_i\prod_{l=1}^{K}Y_{li}\prod_{l=1}^{K_i'}Y_{l(i-l)}$
\EndFor
\State\Return $X_1,X_2,\cdots,X_m$
\end{algorithmic}
\end{algorithm}

In the following, we show the justification of Algorithm \ref{alg5} in Theorem \ref{thm:kdep}.

\begin{thm}\label{thm:kdep} 
    If intermediate variables $\alpha_1$, $\cdots$, $\alpha_m$, $\beta_{11}$, $\cdots$, $\beta_{Km} \in \left[0,1\right]$, Algorithm \ref{alg5} returns the corresponding binary data  with given  marginal probabilities and  $K$-dependent correlation structure.
\end{thm}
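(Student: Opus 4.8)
The plan is to follow the same three-part template used in the justification of Algorithm~\ref{alg3} (Theorem~\ref{kdepTHM_cr}): first confirm that each $X_i$ is a genuine binary variable, then verify the marginal expectations $EX_i = p_i$, and finally compute the pairwise correlations, showing they equal $\rho_{ki}$ whenever $|i-j| = k \le K$ and vanish for $|i-j| > K$. Since every $X_i$ is a product of the mutually independent Bernoulli variables $U_i$ and a collection of the $Y_{lj}$, each factor taking values in $\{0,1\}$, the support of $X_i$ is immediately $\{0,1\}$, so the binary claim is free.

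For the marginal expectations I would exploit the mutual independence of all the $U$'s and $Y$'s. Writing $X_i = U_i\,P_i$ where $P_i = \prod_{l=1}^{K} Y_{li}\prod_{l=1}^{K_i'} Y_{l(i-l)}$ collects the relevant $Y$-factors, independence gives $EX_i = \alpha_i \prod_{l=1}^{K}\beta_{li}\prod_{l=1}^{K_i'}\beta_{l(i-l)}$, and substituting the definition of $\alpha_i$ from Algorithm~\ref{alg5} collapses this to $p_i$. The $i=1$ case is simply the degenerate instance with $K_1' = 0$ and an empty second product.

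The crux is the covariance computation. I would first establish the key combinatorial fact: for $1 \le k \le K$ the factor sets of $X_i$ and $X_{i+k}$ intersect in exactly one variable, namely $Y_{ki}$, which sits in the column-$i$ block of $X_i$ and in the anti-diagonal block of $X_{i+k}$, whereas for $k > K$ the two factor sets are disjoint. Given this, for $k > K$ the two products are functions of disjoint independent variables, so $cov(X_i,X_{i+k}) = 0$. For $1 \le k \le K$ the shared factor $Y_{ki}$ appears squared in $X_i X_{i+k}$; using $Y_{ki}^2 = Y_{ki}$ together with independence of all other factors yields $E(X_i X_{i+k}) = E(X_i)E(X_{i+k})/\beta_{ki} = p_i p_{i+k}/\beta_{ki}$, hence $cov(X_i,X_{i+k}) = p_i p_{i+k}(1-\beta_{ki})/\beta_{ki}$. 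Substituting the definition of $\beta_{ki}$ simplifies $1-\beta_{ki}$ and produces exactly $\rho_{ki}\sqrt{p_i p_{i+k}(1-p_i)(1-p_{i+k})}$; dividing by $\sqrt{Var(X_i)Var(X_{i+k})} = \sqrt{p_i(1-p_i)p_{i+k}(1-p_{i+k})}$ then gives $r_{i,i+k} = \rho_{ki}$, as required.

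The hard part will be the combinatorial bookkeeping of which $Y_{lj}$ are shared, together with the boundary handling near $j = m$. I would verify the intersection claim by solving the index equations $(l,i) = (l',\,i+k-l')$ over the column and anti-diagonal blocks, confirming that $l = l' = k$ is the unique solution and that it lies within the admissible ranges $l \le K$ and $l' \le K_{i+k}' = \min\{i+k-1,K\}$. For indices exceeding $m$, the padding $p_{m+1} = \cdots = p_{m+K} = p_m$ together with the zero-correlation convention $\rho_{i,\,m-(i-1)} = \cdots = \rho_{i,m} = 0$ forces the corresponding $\beta_{ij} = 1$, so those $Y_{ij}$ are deterministically $1$ and drop harmlessly out of every product; I would point this out so that the marginal and covariance identities remain valid at the boundary.
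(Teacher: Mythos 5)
Your proposal is correct and follows essentially the same route as the paper's proof: verify the marginals via independence and the definition of $\alpha_i$, observe that $X_i$ and $X_{i+k}$ share no $Y$-factors when $k>K$, and for $k\le K$ use the single shared factor $Y_{ki}$ with $Y_{ki}^2=Y_{ki}$ to get $cov(X_i,X_{i+k})=p_ip_{i+k}(1-\beta_{ki})/\beta_{ki}$ and hence $r_{i,i+k}=\rho_{ki}$. Your explicit index-equation argument for the uniqueness of the shared factor and your remark that the boundary padding forces $\beta_{ij}=1$ (so those $Y_{ij}$ are deterministically $1$) are details the paper leaves implicit, and your $\rho_{ki}$ indexing is in fact the consistent one given the paper's definition of $\bm{\rho_k}$.
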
 

\begin{proof}
    For $i=1$, we have
    \begin{equation}
    EX_1=EU_1\prod_{l=1}^{K}Y_{li}=\alpha_1\prod_{l=1}^{K}\beta_{li}=p_1.
    \end{equation}
For each $i \in \{2,\cdots,m\}$, 
\begin{equation}
    EX_i=EU_i\prod_{l=1}^{K}Y_{li}\prod_{l=1}^{K_i'}Y_{l(i-l)}=\alpha_i\prod_{l=1}^{K}\beta_{li}\prod_{l=1}^{K_i'}\beta_{l(i-l)}=\frac{p_i}{\prod_{l=1}^{K}\beta_{li}\prod_{l=1}^{K_i'}\beta_{l(i-l)}}\prod_{l=1}^{K}\beta_{li}\prod_{l=1}^{K_i'}\beta_{l(i-l)}=p_i.
\end{equation}
From the generation process, it is obvious that $r_{ij}=0$ when $ \left|j-i\right|>K$.
Considering $i=1$, $j \in \{1, 2, \cdots, m-1\}$, 
\begin{equation}
    \begin{aligned}
        r_{1,1+j}=&\frac{cov(X_1,X_{1+j})}{\sqrt{Var(X_1)Var(X_{1+j})}}=\frac{cov(U_1\prod_{l=1}^{K_1}Y_{l1}, U_{1+j}\prod_{l=1}^{K}Y_{l(1+j)}\prod_{l=1}^{K_{1+j}'}Y_{l(1+j-l)}}{\sqrt{p_1p_{1+j}(1-p_1)(1-p_{1+j})}}\\
        =&\frac{E(U_1U_{1+j}\prod_{l=1}^{K}Y_{l1}\prod_{l=1}^{K}Y_{l(1+j)}\prod_{l=1}^{K_{1+j}'}Y_{l(1+j-l)})}{\sqrt{p_1p_{1+j}(1-p_1)(1-p_{1+j})}}\\
        &-\frac{E(U_1\prod_{l=1}^{K}Y_{l1})E(U_{1+j}\prod_{l=1}^{K}Y_{l(1+j)}\prod_{l=1}^{K_{1+j}'}Y_{l(1+j-l)})}{\sqrt{p_1p_{1+j}(1-p_1)(1-p_{1+j})}}\\
        =&\frac{p_{1}p_{1+j}(1-\beta_{j1})}{\beta_{j1}\sqrt{p_1p_{1+j}(1-p_1)(1-p_{1+j})}}=\rho_{1j}.
    \end{aligned}
    \end{equation}
For $i \in \{2, \cdots,  m-1\}$, $j \in \{1, 2, \cdots, m-i\}$, we have
\begin{equation}
\begin{aligned}
    r_{i,i+j}=&\frac{cov(X_i,X_{i+j})}{\sqrt{Var(X_i)Var(X_{i+j})}}=\frac{cov(U_i\prod_{l=1}^{K}Y_{li}\prod_{l=1}^{K_i'}Y_{l(i-l)},
    U_{i+j}\prod_{l=1}^{K}Y_{l(i+j)}\prod_{l=1}^{K_{i+j}'}Y_{l(i+j-l)})}{\sqrt{p_ip_{i+j}(1-p_i)(1-p_{i+j})}}\\
    =&\frac{E(U_iU_{i+j}\prod_{l=1}^{K}Y_{li}\prod_{l=1}^{K_i'}Y_{l(i-l)}\prod_{l=1}^{K}Y_{l(i+j)}\prod_{l=1}^{K_{i+j}'}Y_{l(i+j-l)})}{\sqrt{p_ip_{i+j}(1-p_i)(1-p_{i+j})}}\\
    &-\frac{E(U_i\prod_{l=1}^{K}Y_{li}\prod_{l=1}^{K_i'}Y_{l(i-l)})E(U_{i+j}\prod_{l=1}^{K}Y_{l(i+j)}\prod_{l=1}^{K_{i+j}'}Y_{l(i+j-l)})}{\sqrt{p_ip_{i+j}(1-p_i)(1-p_{i+j})}}\\
    =&\frac{p_{i}p_{i+j}(1-\beta_{ji})}{\beta_{ji}\sqrt{p_ip_{i+j}(1-p_i)(1-p_{i+j})}}=\rho_{ij}.
\end{aligned}
\end{equation}
Here we have finished the proof.
\end{proof}

Due to the increased model complexity, it is difficult to derive the applicable condition of Algorithm \ref{alg5} theoretically. However, given marginal probabilities and a general correlation matrix, we can still check whether the binary data can be generated using the algorithm by examining whether all intermediate parameters $\alpha_1,\cdots,\alpha_m$, $\beta_{11},\cdots,\beta_{Km}$ lie in the range of $[0,1]$.

\section{Performance} \label{3}
We implemented and integrated the above mentioned algorithms in an R package {CorBin}. In this section, we mainly demonstrate the effectiveness and computational  efficiency of our package. 
If a data set is generated from the desired distribution, the sample mean should  converge to the specified marginal probabilities when sample size increases. Meanwhile, the sample correlation matrix should also  converge to the specified correlation matrix. Here, we demonstrate the effectiveness of our package by checking the consistency of sample mean and correlation matrix from the  generated data. After that, we  demonstrate the computational efficiency of our package by calculating the time needed for generating large-scale high-dimensional datasets. We further compare computational time with two commonly used binary data generation packages: {bindata} (Leisch et al. 1998) and {MultiOrd} (Demirtas 2006).

\subsection{Effectiveness}
In order to check the consistency of sample mean and correlation matrix, we generate datasets with different sample sizes in which the dimension $m$ is fixed to 100. For sample mean, we use the $l_2$ norm of the difference between the sample mean and the specified marginal probabilities as the error. For correlation matrix, we calculate the Frobenius norm of the residual matrix between sample and desired correlation matrix as the error.  
We randomly sampled the marginal expectations from a uniform distribution $U(0.5,0.8)$. The upper bound of correlation coefficients based on the Prentice constraints is $\sqrt{\frac{0.5\times0.2}{0.5\times0.8}}=0.5$. Then we generated a correlation coefficient from a uniform distribution $U\left(0,0.5\right)$ for exchangeable and AR(1) correlation structures. As the constraints for 1-dependent correlation structures are more stringent, we simulated the correlation coefficient from $U\left(0,0.2\right)$. Although Algorithm 5 can be applied to general cases, randomly generating the correlation coefficients cannot always satisfy the natural restrictions. Thus, without loss of generality, we fixed the structures to AR(1), and the settings were the same with simulations of Algorithm 2. We ran 10 times of simulations and calculated the average of errors for each distribution and verify the effectiveness of the algorithms. Figure \ref{fig:precision} shows that under the four correlation structures we have considered  and a specified general correlation matrix, both errors gradually approached to $0$ as sample size increased, indicating that the sample mean and correlation matrix of the generated data converged to the true settings we specified. These results demonstrate the effectiveness of our methods.

\begin{figure}[htpb]
\centering
\includegraphics[width=\linewidth]{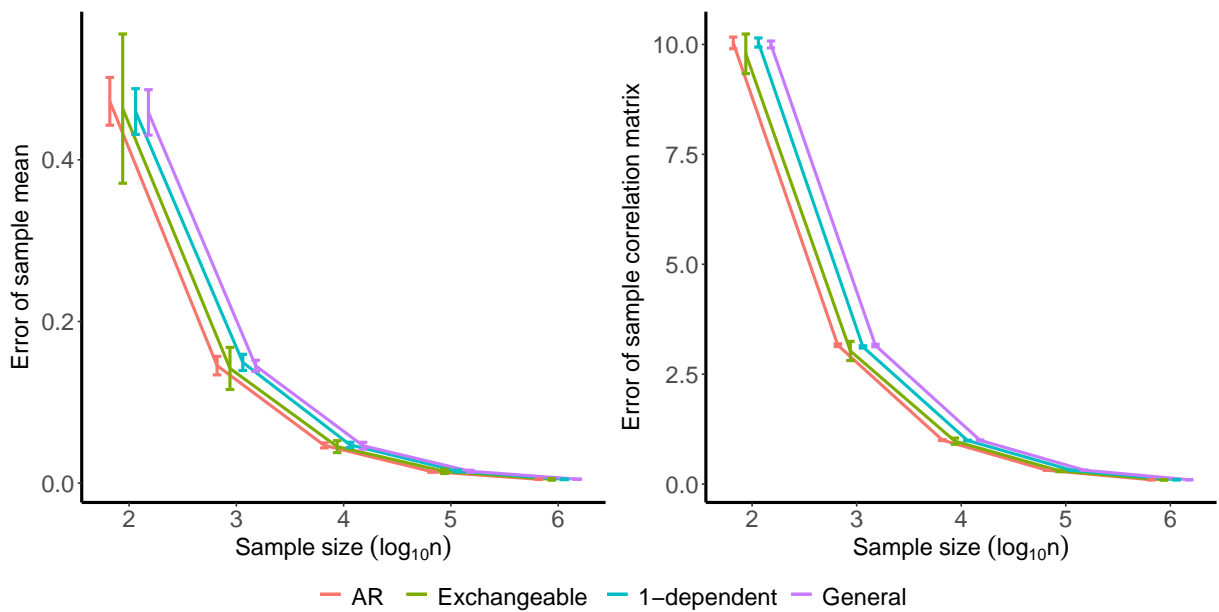}
\caption{As sample size increases, the sample mean and correlation matrix converge to the specified marginal probabilities and correlation matrix.}
\label{fig:precision}
\end{figure}

In addition, we provide five simple examples to illustrate the constructions of the algorithms and the choices for  the parameters for better demonstration. The specified marginal probabilities and the specified correlation for each algorithms are summarized in Table \ref{tab:ex}. The details of the data generation process for each example are attached in Supplementary Material (Example1-5.csv). Besides, we also provide their reproducing code in Supplementary Material (Example-code.R).

\begin{table}[!htbp]
    \centering
    \bcaption{\bf{Examples for illustration of the algorithms}}{ The table summarized specified marginal probabilities and correlation matrices  for each example, The details of data generation process for each example were presented in the corresponding file. }\label{tab:ex}
    \begin{tabular}{cccccc}
    \toprule
    Algorithm& Structure&$m$&$\bm{p}$&$\bm{R}$&Supplementary File\\
    \hline
    \ref{alg1}&Exchangeable& 3&$(0.1,0.2,0.3)$&
    ${
    \left( \begin{array}{ccc}
    1 & 0.3 & 0.3\\
    0.3  & 1 & 0.3  \\
    0.3  & 0.3  & 1
    \end{array} 
    \right )}$&Example1.csv\\ \hline
    \ref{alg2} & AR(1)&  3&$(0.1,0.2,0.3)$&
    ${
    \left( \begin{array}{ccc}
    1 & 0.2 & 0.1\\
    0.2  & 1 & 0.5  \\
    0.1  & 0.5  & 1
    \end{array} 
    \right )}$&Example2.csv\\ \hline
    \ref{alg3} &$1$-dependent&  3&$(0.80,0.82,0.83)$&
    ${
    \left( \begin{array}{ccc}
    1 & 0.3 & 0\\
    0.3  & 1 & 0.5  \\
    0  & 0.5  & 1
    \end{array} 
    \right )}$&Example3.csv\\ \hline
    \ref{alg4} & $1$-dependent&  3&$(0.80,0.82,0.83)$&
    ${
    \left( \begin{array}{ccc}
    1 & 0.3 & 0\\
    0.3  & 1 & 0.5  \\
    0  & 0.5  & 1
    \end{array} 
    \right )}$&Example4.csv\\ \hline
    \ref{alg5} & Generalized&  3&$(0.6,0.7,0.8)$&
    ${
    \left( \begin{array}{ccc}
    1 & 0.3 & 0.1\\
    0.3  & 1 & 0.2  \\
    0.1  & 0.2  & 1
    \end{array} 
    \right )}$&Example5.csv\\ \hline
    \bottomrule
    \end{tabular}
    \end{table}

\subsection{Computational efficiency}

In this section, we demonstrate the superiority of our package in computational efficiency. All experiments performed here were based on a single processor of an  Intel(R) Core(TM) 2.20GHz PC. For comparison, we also considered two commonly used packages {bindata} and {MultiOrd} to generate high dimensional binary data in the experiments. 

It is easy to find that all algorithms presented in Section 2.2, 2.3 and 2.4 involve only one layer of iterative process. Hence, the time complexity of our algorithms generating binary data with exchangeable, decaying-product and  1-dependent correlation structures is linear with respect to dimension $m$ theoretically.
In Section 2.5, although there are two iteration layers in Algorithm \ref{alg5}, the time complexity is still linear with respect to $m$ if $K$ is irrelevant with $m$, which is normal in ordinary $K$-dependent correlation structure. However, when we want to generate binary data with a general correlation matrix, $K$ will be specified to $m-1$ and the time complexity will become quadratic with respect to $m$. Figure \ref{fig:cbtime} presents the average time for generating binary data with different dimensions using {CorBin}, which further validates the linear time complexities of Algorithm 1-4 and quadratic time complexity of Algorithm 5. Here we use Algorithm 5 to generate binary data with autoregressive structure in simulation experiments. Please refer to Supplementary Table S1 for the numeric details of average time with a more general range of $m$ ($m=10^2\sim 10^6$).

\begin{figure}[t!]
\centering
\includegraphics[width=\textwidth]{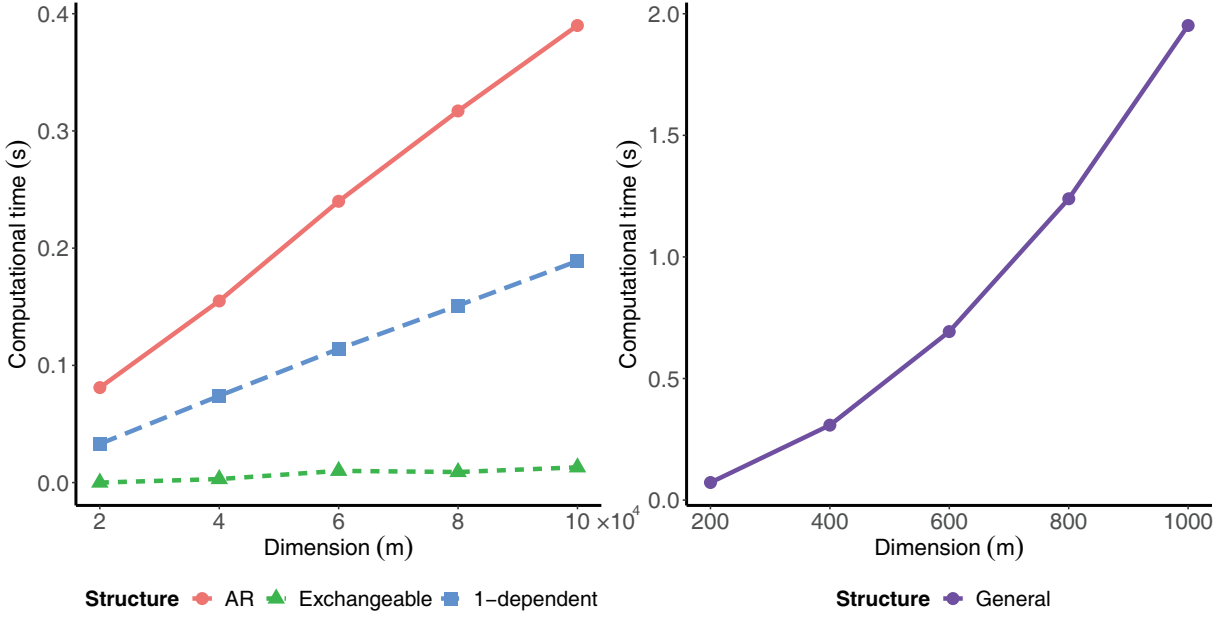}
\caption{The  relationship between the computational time  (in seconds) and the dimension when simulating binary data using our package. The time is the average of 10 runs.}
\label{fig:cbtime}
\end{figure}

The calculation efficiency is  impressive when generating the high-dimensional binary data. It takes only 0.2, 4.0 and 2.0 seconds to generate a $10^6$-dimensional binary data with exchangeable correlation structure, decaying-product and 1-dependent correlation structures, respectively. This dimension scale is  too high for other data generation packages, such as {bindata} and {MultiOrd}. Table \ref{tab:time} shows the  time of different packages for generating binary data in a relative small scale ($m=100\sim 500$). The time recorded is based on experiments of 10 runs. For data generation with general correlation matrix, our algorithm is still very efficient compared to other packages.

\begin{table}[!htbp]
\centering
\bcaption{\bf{Computational times (in seconds) needed for three algorithms under different correlation structures. }}{The least time under each condition is highlighted in boldface. Standard deviations are in the bracket.}\label{tab:time}
\begin{tabular}{ccccc}
\toprule
Structure& $m$&CorBin&bindata&MultiOrd\\
\hline
\multirow{3}{*}{Exchangeable ($\rho=0.5$)}& 100&\bm{$2e-5$} $(6e-6)$ &14.19 (0.07) &6.517 (0.06)\\
 &200&\bm{$4e-5$} $(5e-6)$ &57.13 (0.38)&26.24 (0.63)\\
 &500&\bm{$8e-5$} $(7e-6)$ &360.3 (2.46)&164.9 (0.47)\\
\hline
 \multirow{3}{*}{AR(1) ($\rho=0.4$)}&100&\bm{$5e-4$} $(4e-5)$ &14.36 (0.06)&5.783 (0.05)\\
 &200&\bm{$1e-3$} $(7e-5)$ &56.27 (0.93)&18.18 (0.21)\\
 &500&\bm{$3e-3$} $(3e-4)$ &358.3 (2.24)&123.5 (0.31)\\
 \hline
 \multirow{3}{*}{1-dependent ($\rho=0.2$)}&100&\bm{$3e-4$} $(2e-5)$ &14.30 (0.05)&5.977 (0.06)\\
 &200& \bm{$6e-4$} $(1e-4)$ & 56.83 (0.28)& 18.81 (0.33)\\ 
  &500& \bm{$1e-3$} $(9e-5)$ & 366.2 (3.27)& 112.3 (0.13)\\ 
\bottomrule
\end{tabular}
\end{table}

It can be seen from Table \ref{tab:time} and Figure 4, our method {CorBin} scales linearly  with the dimension  under three correlation structures and scales quadratically with general correlation matrix. As a comparison, the computational time increases rapidly with the growth of dimensions for the other two packages. Taking the exchangeable correlation structure as an example, it takes {bindata} around 14.2 seconds to generate a 100-dimensional data  and 360.3 seconds to generate a 500-dimensional data, which is around 25 times of the former. The results are similar  for {MultiOrd}. Moreover, regardless of the increasing rate with the dimension, our method has significant superiority over the other methods. When the dimension is 100, the time {CorBin} used is around $\frac{1}{700,000}$ of {bindata} and $\frac{1}{320,000}$ of {MultiOrd} in exchangeable correlation structure, and also far less than the other two in AR(1) and 1-dependent structure. When the dimension grows to 500, the advantage is even more obvious, with around $\frac{1}{4,500,000}$ of {bindata} and $\frac{1}{2,060,000}$ of {MultiOrd} in exchangeable correlation structure. For generating data with general correlation matrices, the ratios become around $\frac{1}{680}$ of {bindata} and $\frac{1}{320}$ of {MultiOrd}, respectively.  Figure 5 shows the comparison among {CorBin}, {bindata} and  {MultiOrd} in terms of computational time for general cases.

\begin{figure}[t!]
\centering
\includegraphics[width=0.8\textwidth]{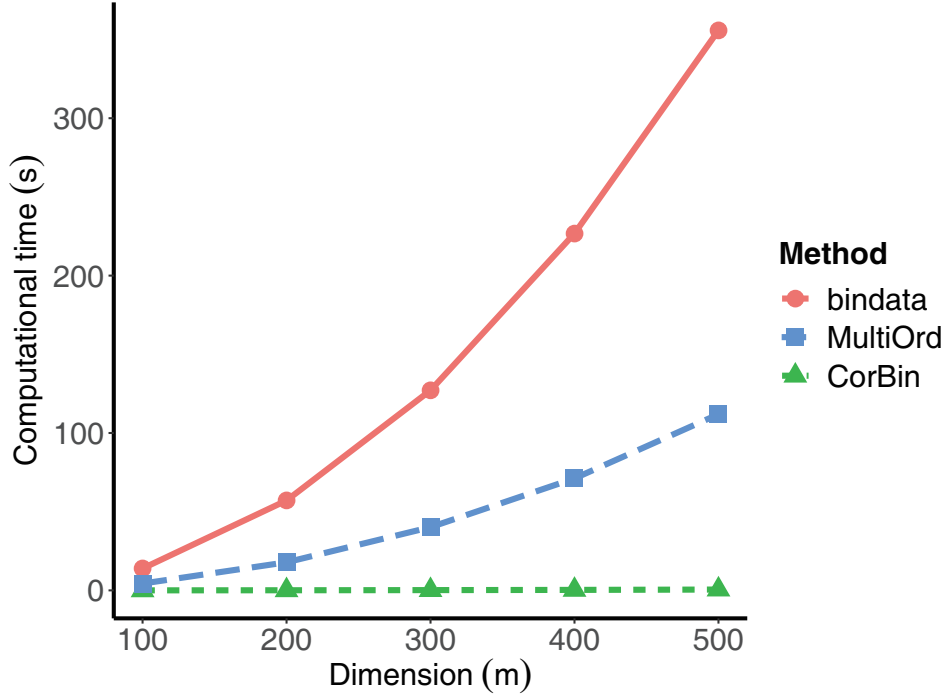}
\caption{The  relationship between the computational time (in seconds) and the dimension when simulating binary data with general correlation structure using {CorBin}, {bindata} and {MultiOrd}. The time is the average of 10 runs.}
\label{fig:method_compare}
\end{figure}

\section{Discussion and conclusions} \label{6}
In this article, we have proposed several efficient algorithms to generate high-dimensional correlated binary data with varied marginal expectations  and  correlation structures. We first focus on three common correlation structures including exchangeable, decaying-product as well as $K$-dependent
correlation structures, and then generalize the method on $K$-dependent structure and extend the applicability to any non-negative correlation matrices.
An R package {CorBin} is also built based on these algorithms and uploaded on CRAN for readers to use. 
Compared with two state-of-the-art binary data generation  packages {bindata} and {MultiOrd} \citep{leisch1998generation,demirtas2006method}, our algorithms require no complicated numerical procedures such as equation-solving or numerical integration and have linear time complexity with respect to the dimension when generating binary data with common correlation structures, leading to significant improvement in computational efficiency. In our simulations, {CorBin} needs less than 0.002 seconds to generate a $10\times1000$-dimensional binary data with exchangeable correlation structure, while generating such data takes more than 14,000 seconds and 6,400 seconds for {bindata} and {MultiOrd}, respectively.

Compared with Lunn and Davies' method, we generalize  the algorithms so that the unequal probability settings can be satisfied. Concretely speaking, Lunn and Davies actually generated clusters of binary variables, and specified fixed the marginal probability and correlation coefficient in each independent cluster. Thus, it is not feasible to specify unequal probabilities in their method. Specifically, for exchangeable correlation structures, we generated each variable by $X_i=\left(1-U_i\right)Y_i+U_iZ$. Lunn and Davies first set a fixed probability $p$ and generate independent $Z\sim Bern\left(p\right)$, $Y_i\sim Bern\left(p\right)$ and $U_i\sim Bern\left(\sqrt\rho\right)$. In order to generate variables with unequal probabilities, an intuitive way is to simply fix a $\gamma$ and generate $Z\sim Bern\left(\gamma\right)$, and adjust the expectation of $Y_i$ so that the probability of $X_i$ is $p_i$. It is infeasible because there is no way to guarantee the expectation of $Y_i$ is exactly lies in $\left[0,1\right]$. A subtle construction of $\gamma$, $\alpha_i$ and $\beta_i$ is important in our algorithm, which can obtain proper $\alpha_i$s $\in\left[0,1\right]$, and derive the desired result (as proved in Theorem 2.1 and 2.2).
For AR(1) correlation structures, we generated each variable by $X_i=\left(1-U_i\right)Y_i+U_iX_{i-1}$, while  Lunn and Davies generated $Y_i\sim Bern\left(p\right)$ and $U_i\sim Bern\left(r_i\right)$. Thus, the expectation of $X_i$ is dependent on the expectation of $X_{i-1}$, $U_i$, and $Y_i$, and making the construction of parameters untrivial. We provide a recursive method to generate the probability of those mediating variables, guaranteeing the feasible of the algorithm (Theorem 2.3 and 2.4).
For $1$-dependent correlation structures, we provide two algorithms. Algorithm 4 generalized  Lunn and Davies' method and Algorithm 3 was unrelated with  Lunn and Davies' method. We thoroughly studied two algorithms and derived their usage scopes in the manuscript. As discussed in  Lunn and Davies' paper, Algorithm 4 cannot be applied to $K>1$ situation. Our Algorithm 3 made up for this drawback. Notably, Algorithm 3 can be generalized to general cases with unequal probabilities and unequal correlation coefficients (Section 2.5, Algorithm 5).

There are still some limitations with our methods. First, our package is applicable only when the correlations are non-negative, because  in our algorithms we need to generate some variables following Bernoulli distribution with marginal probabilities related to the correlations. Negative correlations will lead to negative marginal probabilities, making it infeasible to generate corresponding binary vectors.
Although  in most situations, the capacity to generate binary data with positive correlations will suffice \citep{preisser2014comparison}, 
negative correlations may 
still arise in some special situations. For these situations, \cite{guerra2014note} proposed an alternative way to simulate discrete random vectors with decaying product structure, in which negative correlations are allowed.

We further derived applicable conditions for Algorithm 1-5, and surprisingly found that if the Prentice constraints are satisfied, our algorithms will be able to generate any specified binary data with non-negative exchangeable and decaying-product structures. But this is not the case for the $K$-dependent stationary structure. Therefore, we proposed two algorithms with different applicable conditions to generate binary data for $K=1$ and an algorithm with $K>1$. Our package will  automatically select the suitable algorithm according to the input parameters, but an algorithm with more general applicable conditions is still needed for  $K$-dependent structure.


\beginsupplement
\section*{Appendix A}
\begin{table}[H]
\centering
\bcaption{\bf{The computational time (in seconds) consumed in CorBin for generating binary data with different correlation structures.}}{ The time is the average of 10 runs. Standard deviations are in the brackets.}\label{tab:addm}
\begin{tabular}{cccccc}
\toprule
$m$& $1e+2$&$1e+3$&$1e+4$&$1e+5$&$1e+6$\\
\midrule
 \multirow{2}{*}{Exchangeable}  &   $2e-5$&   $2e-4$   &   $2e-3$& $2e-2$ &$2e-1$\\
 &($5e-6$)  & ($5e-5$)&($3e-4$)& ($5e-3$)&($1e-2$)\\
 \hline
\multirow{2}{*}{AR(1) }& $5e-4$ &$3e-3$& $4e-2$& $4e-1$& $4e+0$\\
 &($4e-5$) &($7e-4$) &($8e-3$) &($2e-2$)
&($7e-2$)\\
\hline
\multirow{2}{*}{1-dependent}& $2e-4$& $2e-3$ &$2e-2$& $2e-1$ &$2e+0$\\
 & ($2e-5$) & ($1e-4$) & ($3e-3$) & ($2e-2$)
&($4e-2$)\\
\bottomrule
\end{tabular}
\end{table}

\bigskip
\begin{center}
{\large\bf SUPPLEMENTARY MATERIAL}
\end{center}

\begin{description}


\item[CorBin:] R-package CorBin containing code to implement the algorithms described in the article. (GNU zipped tar file)
\item[CorBin-manual:] User manual for R package CorBin. (.pdf file).
\item[Examples:] The demonstration data contain five CSV files (Example1-5.csv), corresponding to five examples in described in Section 3.1 (Table 1), which illustrate the constructions of the algorithms and the choices for the parameters. (.rar file)
\item[Example-code:] The reproducing code for demonstration data. (.R file).

\end{description}

\section*{Acknowledgements}
 We thank  the anonymous reviewer and the editor for their highly constructive and detailed feedback that helped us improve our manuscript substantially.

\section*{Funding}
This research was supported in part by the NSF grant DMS 1713120.

\bibliographystyle{agsm}
\bibliography{refs}
\end{document}